\newtheorem{theorem}{Theorem}
\newtheorem{corollary}{Corollary}
\newtheorem{lemma}{Lemma}
\newtheorem{remark}{Remark}
\newtheorem{definition}{Definition}
\newtheorem{assumption}{Assumption}
\begin{document}

\title{\huge Mutual Support by Sensor-Attacker Team for a Passive Target}

 \author{Prajakta Surve$^{1}$,~ Shaunak D.~Bopardikar$^{1}$, ~  Alexander Von Moll$^{2}$, \\ \vspace{-1em} Isaac Weintraub$^{2}$, and ~David W.~Casbeer$^{2}$
\thanks{*This research was supported by the Aerospace Systems Technology Research and Assessment (ASTRA) Aerospace Technology Development and Testing (ATDT) program at AFRL under contract number FA8650-21-D-2602.
DISTRIBUTION STATEMENT  A. Distribution is unlimited. AFRL-2025-2215, Cleared 01 May 2025.}
\thanks{$^{1}$Prajakta Surve and Shaunak D. Bopardikar are with Michigan State University, East Lansing, MI 48823 USA (e-mails: {\tt\small survepra@msu.edu, shaunak@egr.msu.edu})}%
\thanks{$^{2}$Alexander Von Moll, Issac Weintraub and David W. Casbeer are with Control Science Center, Air Force Research Laboratory, Ohio, USA (e-mail: {\tt\small alexander.von\_moll@afrl.af.mil, isaac.weintraub.1@afrl.af.mil, david.casbeer@us.af.mil}).}%
}
\maketitle


\begin{abstract}
We introduce a pursuit game played between a team of a sensor and an attacker and a mobile target in the unbounded Euclidean plane. The target is faster than the sensor, but slower than the attacker. The sensor's objective is to keep the target within a sensing radius so that the attacker can capture the target, whereas the target seeks to escape by reaching beyond the sensing radius from the sensor without getting captured by the attacker.  We assume that as long as the target is within the sensing radius from the sensor, the sensor-attacker team is able to measure the target's instantaneous position and velocity. We pose and solve this problem as a \emph{game of kind} in which the target uses an open-loop strategy (passive target).  
Aside from the novel formulation, our contributions are four-fold. First, we present optimal strategies for both the sensor and the attacker, according to their respective objectives. Specifically, we design a sensor strategy that maximizes the duration for which the target remains within its sensing range, while the attacker uses proportional navigation to capture the target. Second, we characterize the \emph{sensable region} -- the region in the plane in which the target remains within the sensing radius of the sensor during the game -- and show that capture is guaranteed {if and only if} the Apollonius circle between the attacker and the target is fully contained within this region. Third, we {derive a lower bound} on the target's speed below which capture is guaranteed, and an upper bound on the target speed above which there exists an escape strategy for the target, from an arbitrary initial orientation between the agents. Fourth, for a given initial orientation between the agents, we present a sharper upper bound on the target speed above which there exists an escape strategy for the target.
\end{abstract}

\begin{IEEEkeywords}
 Game theory, Multi-Agent Systems,  Optimal Control, Pursuit-evasion, Sensing Constraints.
\end{IEEEkeywords}

\section{Introduction}\label{SecI}
Consider a scenario where a surveillance UAV, operating with a limited sensing radius, is deployed to track a target -- an adversarial vehicle trying to escape a protected area. Simultaneously, {an interceptor vehicle is tasked with reaching the mobile target. However, this vehicle relies on the UAV to provide it with the target's location and velocity at all times. So how should the UAV move so that it can keep the target within its sensing range for as long as possible to enable a successful interception? Or can the target outrun the surveillance UAV and escape interception?} This interplay between sensing constraints and vehicle kinematics in this scenario of \emph{mutual support} poses significant strategic challenges.
This paper formalizes this interaction as a differential pursuit-evasion game between the three agents who have different capabilities and objectives.

\subsection{Related Work}
Works such as \cite{zha2016construction,chen2016multi,garcia2021cooperative,ramana2017pursuit} examined scenarios involving multiple attackers against a single, faster evader.   
For sustained target-tracking, authors in \cite{weintraub2023surveillance} addressed the problem of maximizing observation time of a faster, fixed-course target by a slower observer. The study derived closed-form strategies for both approach and surveillance phases by using  a two-phase optimal control framework.
Similarly, in \cite{awheda2016decentralized}, authors utilized Apollonius circles to define capture regions in a multi-attacker, single-superior-evader setting. Authors in \cite{makkapati2019optimal,kumar2025cooperative} also used a geometric approach by constructing Voronoi-based partitions for assigning attackers. Work done in \cite{yan2022matching} presented matching-based strategies for coordinating multiple heterogeneous pursuers in 3D reach-avoid games, enabling effective defense of a target region through subgame decomposition and dynamic assignment. Additionally, in \cite{garcia2017geometric}, a maritime pursuit problem involving two cutters and a fugitive ship was investigated, which was later generalized in \cite{von2018pursuit} to accommodate coordinated pursuit in more dynamic environments. Recently, in \cite{DOROTHY2024111587} authors have proved that in many pursuit-evasion scenarios, the evader cannot escape beyond the initial Apollonius Circle, enabling simple pursuer strategies that guarantee capture within this region.

Complementing these lines of research, the three-agent Target-Attacker-Defender (TAD) games have also garnered significant attention over the past few years. In these differential games, the defender aims to capture the attacker before it reaches the target. Cooperative optimal control strategies for this setup were explored in \cite{4101686,4102335}. Similarly in \cite{9873915} the authors have addressed differential games with one or two defenders constrained to a circular boundary, deriving analytical strategies and conditions for successful interception. The Triangle Guidance strategy, introduced in \cite{4634cdcbd9b240a38024275f60ffbe84,GARCIA201714200}, directs the defender to stay aligned along the line of sight between the attacker and the target, with the target following a predefined trajectory. {The work in \cite{9872043} discussed a TAD game involving multiple defenders and analyzed two variations based on asymmetric information among the three agents. In both scenarios, the attacker is assumed to have unlimited visibility, while the target and defenders operate under visibility constraints.} In \cite{das2024heterogeneous} authors demonstrated that in a two-versus-two target defense game, attackers employing heterogeneous roles -- where one attacker assists by intercepting a defender -- can outperform traditional assignment-based strategies that treat engagements independently. Additionally, non-cooperative TAD scenarios -- particularly those analyzing the attacker-target miss distance when the target and defender do not cooperate -- have been investigated in works such as \cite{article}. Most of the available TAD literature assumes that the defender has capabilities equal to or superior to the attacker, as well as complete state awareness, conditions that do not hold in mutual support scenarios.

\subsection{Contributions}
This paper presents a three-agent pursuit-evasion problem involving a sensor, an attacker and a target, {in which the sensor has a limited sensing radius within which it can detect the target's location and speed}. {Our assumption that the attacker relies entirely on real-time sensing reflects how certain practical systems operate — for instance, in beam-rider guidance \cite{anjaly2016beam}, where the attacker follows a laser or radar beam steered by an external sensor. These systems lack onboard tracking and are highly vulnerable to disruptions like occlusion, jamming, or sensor failure. Our model captures this dependence, treating the attacker as a sensor-driven agent. } Beyond sensing limitations, the agents differ in speed, with the sensor being slower than both the target and the attacker.  
Although works such as \cite{bopardikar2008discrete,lin2015nash,9966171,maity2024optimal} do take into account the attacker's limited observation capabilities, the attackers in these works either have a speed advantage over target or do not depend on a slower sensing agent. More recently, authors in \cite{HUANG2025112258} employed geometric methods based on Apollonius circles to study dominance regions under non-anticipatory information patterns in a two-agent pursuit-evasion game.  In contrast, we present a more complex three-agent problem where the attacker depends on a cooperating sensor that is slower and has a restricted sensing range, {thereby adding an additional layer of complexity in the information access.}

{
The analysis in this paper focuses on a 2D environment with vehicles that can directly control their heading (that is, they are not constrained by turn rates). 
This setup reflects beyond-visual-range scenarios \cite{garcia2021beyond}, where the horizontal distances between vehicles are much larger than their altitude differences or turn radii.
Thus, the latter are assumed to be negligible.
}

In this paper, the attacker’s goal is to intercept the target, while the sensor’s role is to keep the target within its sensing radius for as long as possible. {The target wins the game if it gets outside the sensor's sensing radius without getting intercepted by the attacker.} 
{The key contributions of this work are summarized below
\begin{enumerate}
\item {\bf A novel hierarchical three-agent \emph{Game of Kind} formulation:} We introduce a new pursuit-evasion game involving three agents with distinct roles: a target, a sensor, and an attacker. The sensor is responsible for detecting the target, while the attacker can only engage once detection occurs, establishing a clear sensor-attacker hierarchy. We consider open-loop strategies for the target. In such strategies, the target selects its heading at the beginning of the game. This heading may be chosen in an adversarial manner by the target, but once chosen, the heading remains constant throughout the engagement. The target selects a constant open-loop heading at the start of the game, which leads to a binary outcome game — either capture (attacker intercepts after detection) or escape (target avoids detection entirely).
\item {\bf Geometry-driven sensor and attacker strategies:} We derive an optimal strategy for the sensor that maximizes the time during which the target remains within the sensing region. We define the \emph{sensable region} — a geometric construct that captures the set of reachable target positions within the sensor’s sensing range over time. The attacker strategy is designed using the Apollonius circle geometry to achieve minimum-time interception. 
\item  {\bf Geometric and speed-based conditions for capture and escape:} We provide analytic expressions for escape time and derive critical thresholds on the target’s speed based on the initial configuration of the agents. Specifically, 
\begin{enumerate}
    \item We show that capture occurs if and only if the Apollonius circle between the attacker and the target lies entirely within the sensable region at the start.
    \item We derive an upper bound on the target’s speed that guarantees capture, regardless of the initial orientation between the agents.
    \item We derive a lower bound on the target’s speed above which escape is always possible, regardless of the initial orientation between the agents.
    \item For a fixed initial orientation, we derive a sharper lower bound that guarantees escape based on the initial orientation between the players. 
\end{enumerate}
Together, these results provide a comprehensive set of novel thresholds that characterize the outcome of the engagement.
\end{enumerate}}

\noindent {\bf Organization of this paper:} We formally define the problem setup and the key assumptions for our analysis in Section \ref{SecII}. We discuss the detailed development of the agent strategies and their properties in Section \ref{SecIII}. We present the main results in Section \ref{SecIV}. We demonstrate numerical illustrations to visualize the proposed results in Section \ref{SecV}. Finally, we conclude with future directions in Section \ref{SecVI}.


\section{Problem Formulation}\label{SecII}

We consider a cooperative pursuit-evasion scenario involving a Sensor ($S$), an Attacker ($A$), and a Target ($T$). The engagement begins with the following initial condition: 1) the sensor detects the target if the target {is within a sensing radius} $R$ from the sensor, and 2) the attacker attempts to intercept the target while the target is within the sensing radius. {The target seeks to end the engagement by increasing its distance from the sensor to a value more than $R$ while avoiding the attacker.}

\medskip
\noindent {\bf Notation:} Throughout this paper, we refer to the sensor, attacker, and target as $S$, $A$, and $T$, respectively. For the engagement parameters, we adopt the convention of using superscripts to indicate physical entities or variables (such as sensor, attacker, target), and subscripts to denote time. If a variable does not have a time subscript, then it is assumed to be constant. For example,  
$d_t^{ST}$ represents {the Euclidean distance} between the sensor $S$ and the target $T$ at time $t$. Certain engagement parameters such as the heading or the speed may include an additional superscript ``$\star$'' to signify an optimal or critical value. For instance, $\gamma^{S\star}$ denotes a constant optimal heading for the sensor. 

\medskip
The positions of all three agents $S$, $A$ and $T$  are expressed in {a fixed global} coordinate frame, illustrated in Fig.~\ref{fig:eng}. The initial position of $S$ is set at the origin of this reference frame. At any time $t \in [0,\, t_f]$, where $t_f$ denotes the termination time of the engagement, the line-of-sight (LOS) angles from $A$ to $T$ and from $S$ to $T$ are denoted by $\theta^{AT}_t$ and $\theta^{ST}_t$, respectively. The corresponding relative distances at time $t$ are denoted by $d^{AT}_t$ and $d^{ST}_t$.

Each agent $i \in \{S, A, T\}$ moves with a constant speed $v^i$ and follows a fixed heading angle $\gamma^i$. The target chooses its heading ($\gamma^T$) possibly in an adversarial manner at the beginning of the engagement, that is, at $t = 0$, and maintains this constant heading throughout, regardless of the strategies employed by $S$ or $A$. We refer to such targets as \emph{passive targets}. This assumption is standard in the pursuit-evasion literature \cite{boyell1976defending,  weintraub2021maximum, weintraub2021engagement, vonmoll2022pure}, where it enables analytical tractability and reflects scenarios in which a non-maneuverable target selects its heading at the start of the engagement and follows an open-loop trajectory, uninfluenced by the attacker's actions. {The target is modeled as \emph{passive} as it is unable to perform deceptive or complex maneuvers. This assumption helps us focus on the core pursuit-evasion kinematics and obtain clear analytical results.}


Let $i_t = (i^x_t, i^y_t)$ denote the $x$ and $y$ coordinates of agent $i$. The agents’ kinematics are governed by the following equations.
\begin{align}\label{dyn}
\begin{split}
     \dot{S}^{x}_t &=   v^S \cos \gamma^S, ~~
    \dot{S}^{y}_t =   v^S \sin \gamma^S,\\
    \dot{A}^{x}_t &=     v^A \cos \gamma^A,~~
    \dot{A}^{y}_t = v^A \sin \gamma^A,\\
    \dot{T}^{x}_t &= v^T \cos \gamma^T,~~ 
    \dot{T}^{y}_t =v^T \sin \gamma^T.
\end{split}
\end{align}
Note that $\gamma^i$ also denotes the {control input} for the corresponding agent $i$.

\begin{figure}[!h]
    \centering
\includegraphics[width=0.75\linewidth]{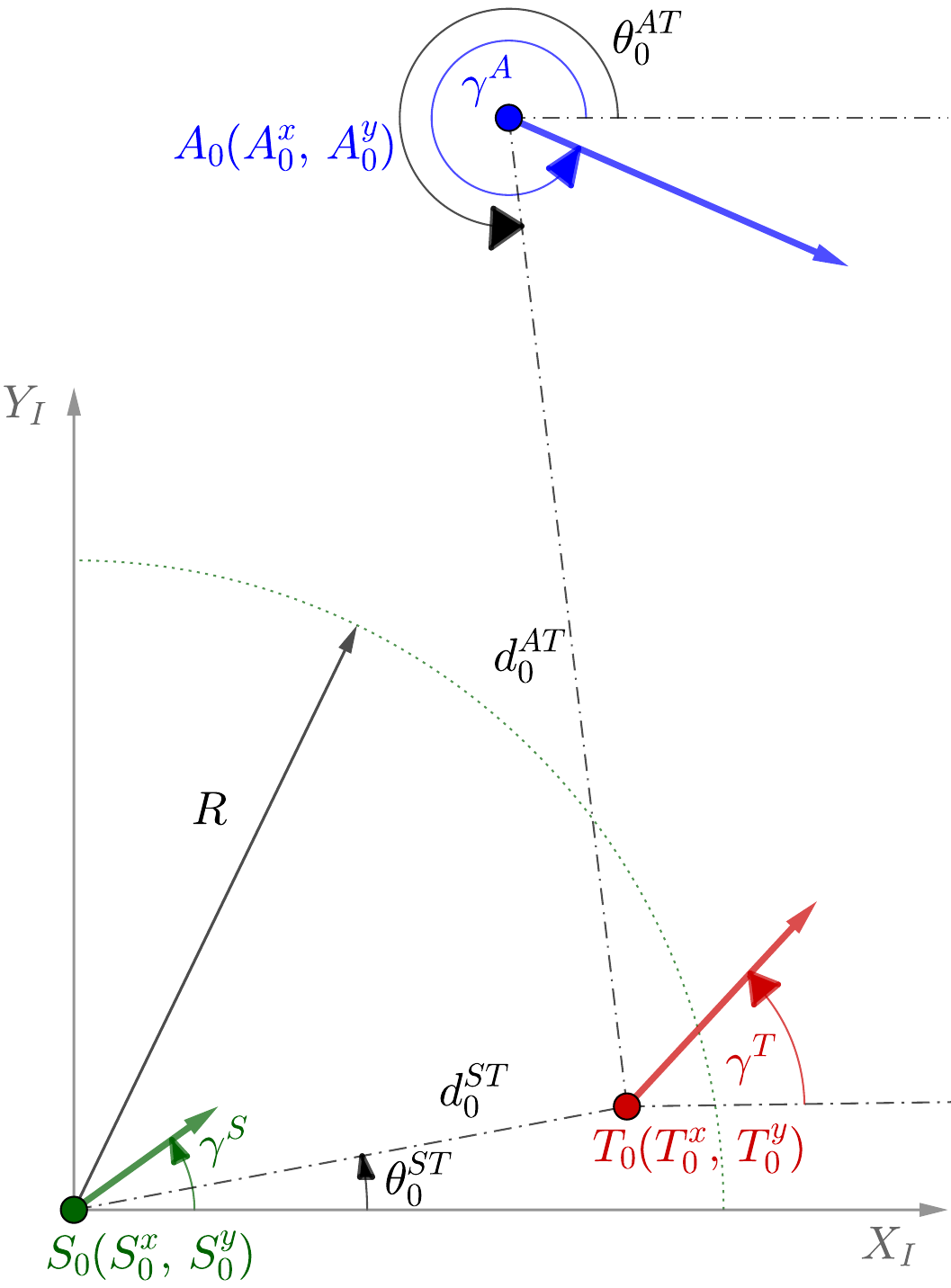}
    \caption{Illustration of the Sensor-Attacker-Target (SAT) engagement scenario, showing the relative positions of the agents.}
    \label{fig:eng}
\end{figure}

\medskip
The engagement terminates upon satisfying one of the following conditions:
\begin{itemize}
    \item \textbf{Escape}: {The distance between the target $T$ and the sensor $S$ equals or exceeds $R$ while the attacker $A$ and the target locations have not been coincident. Mathematically, this means there exists a time instant ${t}_f$ such that
    \[
    d^{ST}_{t_f} \geq R, \quad \text{ and } \quad d^{AT}_t > 0,~ \forall\, t \in [0,t_f].
    \]}
    
    \item \textbf{Capture}: {The distance between the target $T$ and the sensor $S$ remains less than $R$ while the attacker $A$ and the target $T$ locations become coincident. Mathematically, this means that there exists a time instant $t_f$ such that
    \[
    d^{AT}_{t_f} = 0, \quad \text{ and } \quad d^{ST}_t < R,~ \forall\, t \in [0,t_f].
    \]
    }
\end{itemize}

\medskip
{Next, we introduce assumptions on the information pattern and the speeds of the agents in this formulation.}

\begin{assumption}[Information Pattern]\label{as:info}
   The {sensor-attacker team} has access to the target’s current location $T_t\,(T^x_t, T^y_t)$, speed ($v^T$) and the heading ($\gamma^T$) as long as the target remains within the sensor’s sensing range, defined by a disc of radius $R$. 
   {The target, on the other hand, has full knowledge of the initial positions of both the sensor and the attacker at time $t=0$, as well as the strategies employed by $S$ and $A$.} \hfill $\blacksquare $
\end{assumption}

As mentioned previously, the targets in consideration in this study are \emph{passive targets} in that they do not adapt their trajectory in response to the attacker or sensor, even if they have access to information about their strategies.

\begin{assumption}[Agent Speeds] \label{as:speeds}
   The agents' speeds satisfy
   \[
   v^S < v^T < v^A,
   \]
   with the {attacker} being the fastest and the {sensor being} the slowest among the agents. \hfill $\blacksquare $
\end{assumption}

Note that Assumption \ref{as:speeds} ensures the problem remains non-trivial. For example, if $T$ were slower than both $A$ and $S$, then capture would be inevitable. Conversely, if $T$ were faster than both $A$ and $S$, then it would always be able to escape, rendering the engagement trivial. Assumption~\ref{as:speeds} makes the problem meaningful and well-posed. 

Without loss of generality, the attacker's speed is normalized to $v^A = 1$. The Sensor-to-Target speed ratio is defined as $\nu = \dfrac{v^S}{v^T} < 1$, and the Target-to-Attacker speed ratio as $\mu = \dfrac{v^T}{v^A} < 1 $.

\medskip
We describe the problem of determining whether the target $T$ is captured (that is, the team of $A$ and $S$ wins) or successfully evades capture (that is, $T$ wins), as a \emph{game of kind}. We will begin by deriving a strategy for $S$ to ensure that $T$ remains within the sensing radius as long as possible. As long as $T$ is within the sensing distance from $S$, {$A$ will follow a strategy to capture $T$ in minimum time as derived later in the manuscript.} Our objective then is to characterize the conditions on the problem parameters that guarantee either capture or successful evasion.

\section{Agent strategies and their properties}\label{SecIII}
In this section, we {present candidate strategies that the sensor and the attacker should follow, based on their respective objectives. We also present their respective analytic properties.}
\subsection{Strategy for $S$}
As depicted in Fig. \ref{fig:dirS}, the sensor $S$ engages with the target $T$, which moves at a constant heading $\gamma^T$ known to $S$. For capture, the sensor must maintain the target within its sensing radius $R$.

\medskip
\begin{lemma}[\textbf{Strategy for $S$}]\label{lemma1}
     Let the initial positions of the sensor and target be $S_0 = (S^{x}_0, S^{y}_0)$ and $T_0 = (T^{x}_0, T^{y}_0)$, respectively. Suppose that the target moves with a constant heading $\gamma^T$, known to the sensor, and that the engagement ends at time $t = t_f$, with the respective final positions of $S$ and $T$ denoted as $S_{t_f} = (S^{x}_{t_f}, S^{y}_{t_f})$ and $T_{t_f} = (T^{x}_{t_f}, T^{y}_{t_f})$. Then,
     \begin{enumerate}
         \item The points $S_0, S_{t_f}$ and $T_{t_f}$ are collinear and in that order with $d^{ST}_{t_f} = \|S_{t_f}-T_{t_f}\| = R$, and
         \item the optimal sensor heading $\gamma^{S\star}$ that maximizes the duration for which the target remains within the sensing radius is given by  
\begin{align}\label{eq:gammastar}
    \gamma^{S\star} = \text{atan2} \left( T_{t_f}^y - S_{0}^y, \, {T_{t_f}^x - S_{0}^x} \right), 
\end{align}
where $\text{atan2}(\cdot,\cdot)$ is the four quadrant inverse tangent function.

     \end{enumerate}
\end{lemma}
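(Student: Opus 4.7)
The plan is to prove Lemma~\ref{lemma1} by a geometric triangle-inequality argument, which both characterizes the optimality condition and yields the heading formula. Since the target moves at constant known heading $\gamma^T$ and speed $v^T$, its position at any candidate termination time $t_f$ is fully determined as $T_{t_f} = T_0 + t_f v^T(\cos\gamma^T,\sin\gamma^T)$. The sensor, moving at constant speed $v^S$ with some fixed heading, satisfies $\|S_{t_f} - S_0\| = v^S t_f$. Termination occurs when $d^{ST}_{t_f} = \|T_{t_f} - S_{t_f}\| = R$, and the aim is to maximize $t_f$ over $\gamma^S$.

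First I would derive an upper bound on $t_f$ valid for every constant-heading (indeed, any speed-$v^S$) sensor trajectory. By the triangle inequality,
\[
\|T_{t_f} - S_0\| \;\le\; \|T_{t_f} - S_{t_f}\| + \|S_{t_f} - S_0\| \;=\; R + v^S t_f,
\]
with equality if and only if $S_0$, $S_{t_f}$, and $T_{t_f}$ are collinear with $S_{t_f}$ lying between $S_0$ and $T_{t_f}$. Because $\|T_t - S_0\|$ grows asymptotically at rate $v^T$ while the right-hand side grows at the slower rate $v^S < v^T$ (Assumption~\ref{as:speeds}), this inequality implicitly caps $t_f$ at a unique value $t_f^\star$ determined by $\|T_{t_f^\star} - S_0\| = R + v^S t_f^\star$. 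The maximum is attained precisely when the triangle-inequality equality condition holds, which proves part~(1). Part~(2) follows immediately: the only constant heading that drives the sensor in a straight line from $S_0$ through the required point $S_{t_f^\star}$ on the segment $\overline{S_0 T_{t_f^\star}}$ is the one pointing from $S_0$ directly toward $T_{t_f^\star}$, giving $\gamma^{S\star} = \text{atan2}(T_{t_f}^y - S_0^y,\, T_{t_f}^x - S_0^x)$.

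The main technical point I expect to handle carefully is that the triangle-inequality step controls only the terminal distance, so I must still verify that under $\gamma^{S\star}$ the sensing constraint $\|T_t - S_t\| \le R$ is respected for every $t \in [0, t_f^\star]$, and not merely at the endpoint $t = t_f^\star$. Because both agents travel in straight lines, $\|T_t - S_t\|^2$ is a convex quadratic in $t$, so its maximum on a closed interval is attained at an endpoint; since the engagement's initial condition yields $\|T_0 - S_0\| \le R$ and the construction enforces $\|T_{t_f^\star} - S_{t_f^\star}\| = R$, convexity forces $\|T_t - S_t\| \le R$ throughout. This feasibility check is the main obstacle: without the convexity observation, the triangle-inequality bound alone would be insufficient to guarantee that the straight-line trajectory does not prematurely lose the target en route to the optimal terminal configuration.
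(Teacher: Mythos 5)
Your proof is correct, but it takes a genuinely different route from the paper. The paper proves Lemma~\ref{lemma1} via Pontryagin's Maximum Principle: it forms the Hamiltonian for the relative state $(X_t,Y_t)$, shows the costates are constant, and uses the transversality condition on the terminal manifold $d^{ST}_{t_f}=R$ to conclude that the constant optimal heading points from $S$ toward $T_{t_f}$ (collinearity then follows from constancy of the heading). Your argument replaces this with an elementary certificate: the triangle inequality $\|T_{t_f}-S_0\|\le R+v^S t_f$ bounds the termination time of \emph{any} sensor trajectory of speed at most $v^S$ (not just constant-heading ones), with equality forcing exactly the collinear configuration of part~(1), and the unique root $t_f^\star$ of $\|T_t-S_0\|=R+v^S t$ (which exists and is unique since $t\mapsto\|T_t-S_0\|$ is convex, starts below $R$, and grows at rate $v^T>v^S$) is then attained by the straight-line heading of part~(2). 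This buys you two things the paper's proof does not supply: a global optimality guarantee rather than a first-order necessary condition (PMP alone does not certify a maximizer, and the paper must additionally resolve a sign ambiguity by assuming $\mathrm{sign}(\eta)=-1$), and an explicit feasibility check that $d^{ST}_t\le R$ holds on all of $[0,t_f^\star]$ via convexity of the quadratic $\|T_t-S_t\|^2$ — a step the paper's proof omits entirely. What the paper's optimal-control formulation buys in exchange is a template that extends to richer agent dynamics and terminal costs, and it derives (rather than posits) that the optimizing heading is constant. Two minor points you should make explicit if you write this up: uniqueness of $t_f^\star$ needs the convexity of $t\mapsto\|T_t-S_0\|-v^S t$ (the asymptotic-rate remark alone gives only existence), and the endpoint-maximum argument also rules out an earlier first crossing of $d^{ST}_t=R$, which is what actually guarantees the engagement does not terminate before $t_f^\star$.
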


\begin{proof}{Proof of Lemma \ref{lemma1}}
  \begin{figure}[!h]
        \centering
        \includegraphics[width=0.95\linewidth]{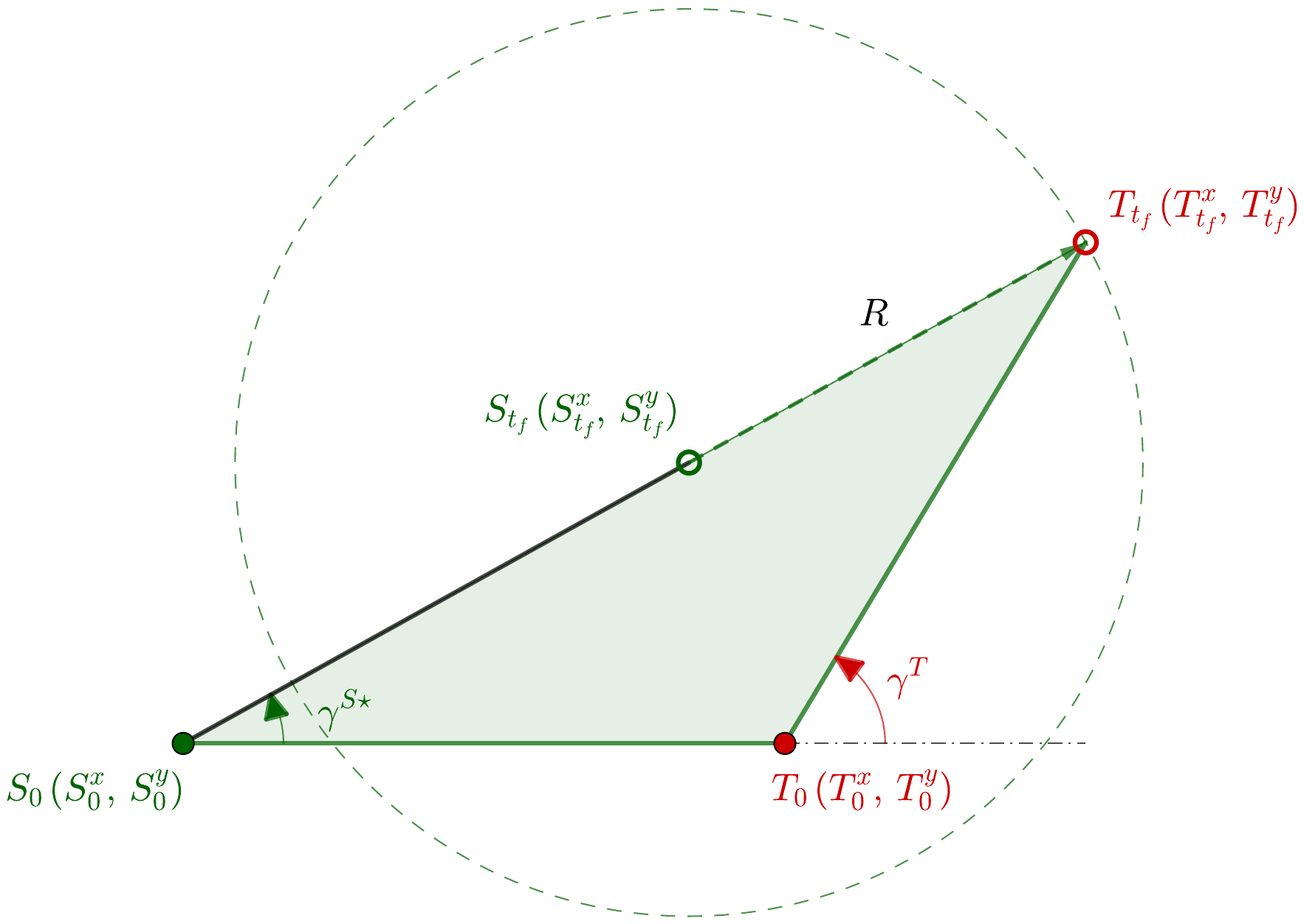}
        \caption{Optimal direction for $S$.}
        \label{fig:dirS}
    \end{figure}

The state kinematics of the agents $S$ and $T$ are given by Eq. (\ref{dyn}).
Let us denote the relative position of $S$ with respect to $T$ by $(X_t,\,Y_t)$ at any given time $t$ as
\begin{align}\label{relX}
    X_t &= T_t^x - S_t^x,\\ \label{relY}
    Y_t &= T_t^y - S_t^y.
\end{align}
To maximize the time of the engagement, we define the cost function
\begin{align}\label{j}
    J = \int_0^{t_f} \mathrm{d}t,
\end{align}
where, as mentioned earlier, $t_f$ denotes the final time of the engagement where either capture or escape occurs, and there is no terminal cost.

{Then,} the Hamiltonian for this problem is given by
\begin{align}\label{H}
    \mathcal{H} = 1 + \lambda^x_t \dot{X}_t + \lambda^y_t \dot{Y}_t,
\end{align}
where $\lambda^x_t \in \mathbb{R}$ and $\lambda^y_t \in \mathbb{R}$ are the costate variables at time $t$. The evolution of the costate variables satisfies 
\begin{align}\label{ldtxy}
    \dot{\lambda}^x_t &= -\frac{\partial \mathcal{H}}{\partial X_t}, ~~\text{and}~~ \dot{\lambda}^y_t = -\frac{\partial \mathcal{H}}{\partial Y_t}.
\end{align}
Since $\mathcal{H}$ does not explicitly depend on $X_t$ or $Y_t$, it follows that
\begin{align}\label{constlxy}
    \dot{\lambda}^x_t = \dot{\lambda}^y_t = 0.
\end{align}
Thus, $\lambda^x$ and $\lambda^y$ remain constant throughout the motion.
By Pontryagin's Maximum Principle \cite{pontryagin2018mathematical}, the optimal heading $\gamma^{S\star}$ for $S$ that maximizes the Hamiltonian $\mathcal{H}$, satisfies
\begin{align}
    \frac{\partial \mathcal{H}}{\partial \gamma^{S\star}} = 0.
\end{align}
Taking the partial derivative of $\mathcal{H}$ with respect to $\gamma^{S\star}$ and setting it to zero yields:
\begin{align}
     \lambda^x (v^S \sin \gamma^{S\star}) + \lambda^y (- v^S \cos \gamma^{S\star}) &= 0,\\ \label{tanGm}
  \implies    \tan \gamma^{S\star} &= \frac{\lambda^y}{\lambda^x}.
\end{align}
Since $\lambda^x$ and $\lambda^y$ are constants and their values depend on the final positions of $T$ and $S$, the sensor heading $\gamma^{S\star}$ is also a constant throughout the engagement. 

To determine $\lambda^x$ and $\lambda^y$, we apply the transversality condition from optimal control theory. {Define} $\phi(S^x_{t_f}, S^y_{t_f}, T^x_{t_f}, T^y_{t_f})$ as the terminal manifold that defines the end of the engagement as 
\begin{align}
    \phi = (S^{x}_{t_f} - T_{t_f}^{x})^2 + (S_{t_f}^y - T_{t_f}^y)^2 - R.
\end{align}
{Then, at the end of the engagement, the target is on the boundary of the sensing disc at $t = t_f$ is equivalent to $\phi = 0$.} The transversality condition {in the case of no terminal cost} is given by
\begin{align}
    \lambda_{t_f} = [\lambda^x_{t_f},\, \lambda^y_{t_f}] = \eta \frac{\partial \phi}{\partial \mathbf{p}_{t_f}},
\end{align}
where $\eta \in \mathbb{R}$ is a constant and 
\begin{align}
    \mathbf{p}_{t_f} = [T_{t_f}^x - S_{t_f}^x, T_{t_f}^y - S_{t_f}^y]^\top.
\end{align}
Therefore, at time $t_f$,
\begin{align} \label{lx}
    \lambda^x_{t_f} &= 2\eta (S_{t_f}^{x} - T_{t_f}^x), \\ \label{ly}
    \lambda^y_{t_f} &= 2\eta (S_{t_f}^y - T_{t_f}^y).
\end{align}
From Eq. (\ref{tanGm}), and the fact that $\gamma^{S\star}$ is a constant throughout the engagement, the optimal heading direction is given by
\begin{align} \label{cGm}
    \cos \gamma^{S\star} &= \frac{\lambda^x_{t_f}}{\sqrt{{\lambda^x_{t_f}}^2 + {\lambda^y_{t_f}}^2}}, \\ \label{sGm}
    \sin \gamma^{S\star} &= \frac{\lambda^y_{t_f}}{\sqrt{{\lambda^x_{t_f}}^2 + {\lambda^y_{t_f}}^2}}. 
\end{align}
Substituting Eqs. (\ref{lx}) and (\ref{ly}) into Eqs. (\ref{cGm}) and (\ref{sGm}), we obtain:
\begin{align}
    \cos \gamma^{S\star} &= \frac{2 \eta \left(S_{t_f}^x - T_{t_f}^x\right)}{\sqrt{\left[2 \eta (S_{t_f}^x - T_{t_f}^x)\right]^2 + \left[2 \eta (S_{t_f}^y - T_{t_f}^y)\right]^2}}, \\
    \sin \gamma^{S\star} &=  \frac{2 \eta \left(S_{t_f}^y - T_{t_f}^y\right)}{\sqrt{\left[2 \eta (S_{t_f}^x - T_{t_f}^x)\right]^2 + \left[2 \eta (S_{t_f}^y - T_{t_f}^y)\right]^2}}.
\end{align}
The above equations can be further simplified as
\begin{align} \label{cGmstar}
    \cos \gamma^{S\star}  &= -\text{sign} (\eta) \left(\dfrac{T_{t_f}^x - S_{t_f}^x}{R}\right)\\ \label{sGmstar}
   \sin \gamma^{S\star}  &=  -\text{sign} (\eta)  \left(\dfrac{T_{t_f}^y - S_{t_f}^y}{R} \right).
\end{align}
Since $\gamma^{S\star}$ is the heading of $S$ pointing at $T$, we assume $\text{sign}(\eta) = -1$, which further leads to
\begin{align}
     \cos \gamma^{S\star}  &=  \dfrac{T_{t_f}^x - S_{t_f}^x}{R}, ~\text{and}~
    \sin \gamma^{S\star}  =  \dfrac{T_{t_f}^y - S_{t_f}^y}{R}.
\end{align}
Since $\gamma^{S\star}$ is constant, $S$ follows a straight line trajectory, which also implies that $S_0$, $S_{t_f}$ and $T_{t_f}$ must be collinear as depicted in Fig. \ref{fig:dirS}. Thus, the optimal heading angle $\gamma^{S\star}$ is given by:
\begin{align}
    \gamma^{S\star} = \text{atan2} (T_{t_f}^y - {S_{0}^y}, \, T_{t_f}^x - {S_{0}^x}).
\end{align}
This completes the proof.
\end{proof}
\medskip 
Recall that for $T$ to be captured, {its distance from $S$ has to be less than} $R$. Otherwise, it is considered to have successfully escaped, ending the engagement. The following result provides an expression for the time $t_f$ required for $d^{ST}_{t_f} = R$.

\begin{theorem}[\textbf{Time to Escape}]\label{thm1}
{From given initial locations of $S$ and $T$}, if the sensor $S$ follows the optimal heading $\gamma^{S\star}$ as given in 
Eq.~\eqref{eq:gammastar} while $T$ moves with a constant heading $\gamma^T$ and speed $v^T$, {then time required by the target to escape the detection and thereby ending the engagement, is given by}  
\begin{align}\label{tesc}
\begin{split}
    t_f &= \dfrac{ \Omega + \sqrt{\Omega^2 - (1 - \nu^2)({d^{ST}_0}^2-R^2)}}{v^T(1-\nu^2)} ,
\end{split}
\end{align}
where $\Omega = \nu R - d^{ST}_0 \cos(\gamma^T - \theta^{ST}_0)$. \flushright $\blacksquare$
\end{theorem}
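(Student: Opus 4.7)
The plan is to leverage the geometric structure established in Lemma~\ref{lemma1}: at the terminal time, $S_0$, $S_{t_f}$, and $T_{t_f}$ are collinear in that order with $\|T_{t_f} - S_{t_f}\| = R$. This yields one expression for $\|T_{t_f}-S_0\|$ directly from $S$'s straight-line motion, while the target's constant-heading kinematics give another. Equating the two produces a single scalar quadratic in $t_f$ that can be solved in closed form.

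First, since $S$ travels in a straight line at constant speed $v^S$, the collinearity and ordering imply
\begin{equation}
    \|T_{t_f} - S_0\| = \|S_{t_f} - S_0\| + \|T_{t_f} - S_{t_f}\| = v^S t_f + R.
\end{equation}
Next, writing $T_{t_f} = T_0 + v^T t_f(\cos\gamma^T, \sin\gamma^T)$ together with $T_0 - S_0 = d^{ST}_0(\cos\theta^{ST}_0, \sin\theta^{ST}_0)$, I would expand $\|T_{t_f}-S_0\|^2$ (equivalently, apply the law of cosines to the triangle $S_0 T_0 T_{t_f}$) to obtain
\begin{equation}
    \|T_{t_f} - S_0\|^2 = {d^{ST}_0}^2 + 2 v^T t_f\, d^{ST}_0 \cos(\gamma^T - \theta^{ST}_0) + (v^T t_f)^2.
\end{equation}

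Squaring the first identity and equating with the second, then collecting terms and introducing the parameters $\nu = v^S/v^T$ and $\Omega = \nu R - d^{ST}_0 \cos(\gamma^T - \theta^{ST}_0)$, reduces the equation to the quadratic
\begin{equation}
    (v^T)^2(1-\nu^2)\, t_f^2 - 2 v^T \Omega\, t_f + \bigl({d^{ST}_0}^2 - R^2\bigr) = 0.
\end{equation}
Applying the quadratic formula yields the two candidate roots $t_f = \bigl(\Omega \pm \sqrt{\Omega^2 - (1-\nu^2)({d^{ST}_0}^2 - R^2)}\bigr)/\bigl(v^T(1-\nu^2)\bigr)$, and the only remaining step is to justify the choice of the ``$+$'' branch. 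Because the engagement is nontrivial only when the target begins strictly inside the sensing disc, ${d^{ST}_0}^2 - R^2 < 0$, while Assumption~\ref{as:speeds} guarantees $1-\nu^2 > 0$; the product of the two roots is therefore negative, so they have opposite signs, and the unique positive root is precisely the expression in Eq.~\eqref{tesc}. The main subtlety, beyond routine algebra, is this root-selection argument together with verifying that the discriminant is strictly positive so that a finite escape time always exists under the stated assumptions.
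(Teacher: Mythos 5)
Your proposal is correct and follows essentially the same route as the paper: both use the collinearity and ordering from Lemma~\ref{lemma1} to write the terminal sensor-to-target geometry as $\|T_{t_f}-S_0\| = v^S t_f + R$, apply the law of cosines in the triangle $S_0\,T_0\,T_{t_f}$, and solve the resulting quadratic (the paper works in the variable $\|T_0 T_{t_f}\| = v^T t_f$ while you work directly in $t_f$, which is only a rescaling). Your root-selection argument via the negative product of the two roots is, if anything, slightly more explicit than the paper's appeal to ``physical feasibility.''
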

\begin{proof}[Proof of Theorem \ref{thm1}]
 Consider the engagement between $S$ and  $T$ at $t = t_f$, as illustrated in Fig.~\ref{fig:esct}. 
 If $\|S_0 S_{t_f}\| = \|S_0 - S_{t_f}\|$ and $\|T_0 T_{t_f}\| = \|T_0 - T_{t_f}\|$ represent the distances traveled by $S$ and $T$ during the interval $[0,\,t_f]$, respectively, then equating their travel times gives
\begin{align}\label{nu}
    \dfrac{\|S_0 S_{t_f}\|}{v^S} =  \dfrac{\|T_0 T_{t_f}\|}{v^T} \implies \|S_0S_{t_f}\| = \nu \|T_0T_{t_f}\|,
\end{align}
where $\nu = \dfrac{v^S}{v^T}$ is the speed ratio of $S$ and $T$.

Assuming that the target reaches the boundary of the sensing disc at $T_{t_f}$, as shown in Fig. \ref{fig:esct}, we apply the law of cosines to $\triangle S_0\, T_{t_f}\, T_0$, yielding
\begin{align}\label{esc1}
\nonumber
        (\|S_0 S_{t_f}\| + R)^2 &= {d^{ST}_0}^2+ \|T_0 T_{t_f}\|^2 \\ & \hspace{0.15cm}- 2\,d^{ST}_0 \,\|T_0 T_{t_f}\| \cos \left[\pi - (\gamma^T - \theta^{ST}_0) \right],
\end{align}
\begin{figure}[!h]
    \centering
    \includegraphics[width=0.85\linewidth]{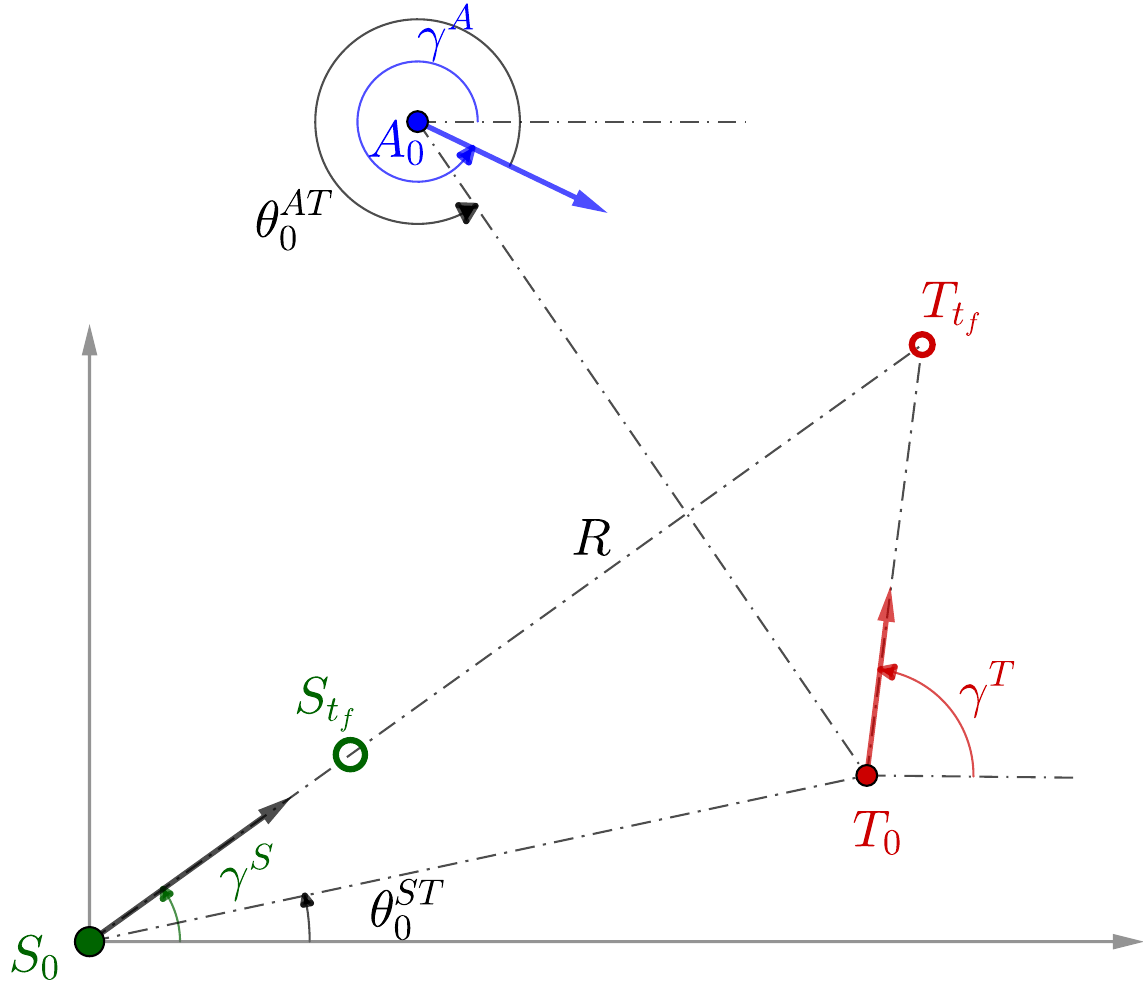}
    \caption{Engagement among the three agents at the time of escape.}
    \label{fig:esct}
\end{figure}
Substituting the value of $\|S_0 S_{t_f}\|$ from Eq.~\eqref{nu} and simplifying, we obtain
\begin{align}\label{esc2}
\nonumber
        (\nu \|T_0 T_{t_f}\| + R)^2 &= {d^{ST}_0}^2 + \|T_0 T_{t_f}\|^2 \\ &\hspace{0.2cm} + 2\,d^{ST}_0 \,\|T_0 T_{t_f}\| \cos \left(\gamma^T - \theta^{ST}_0 \right)
    \end{align}
Rearranging this results in the quadratic form of $\|T_0 T_{t_f}\|$ gives
\begin{align}\label{quad}
  \nonumber 0 &= (1-\nu^2) \|T_0 T_{t_f}\|^2  \\ &\hspace{0.1cm} + 2 (d^{ST}_0 \cos (\gamma^T - \theta^{ST}_0) - \nu R) \|T_0 T_{t_f}\| + {d^{ST}_0}^2 - R^2. 
\end{align}
Solving this quadratic equation gives the general solution for $\|T_0 T_{t_f}\|$ as
   \begin{align}\label{ttf-roots}
    \begin{split}
     & \| T_0 T_{t_f}\| = \dfrac{\nu R - d^{ST}_0 \cos(\gamma^T - \theta^{ST}_0) }{(1-\nu^2)}  \\& \hspace{0.15cm}   \pm \dfrac{\sqrt{(\nu R - d^{ST}_0 \cos(\gamma^T - \theta^{ST}_0))^2 - (1 - \nu^2)({d^{ST}_0}^2-R^2)}}{(1-\nu^2)}.
   \end{split}
   \end{align}
Since $\nu < 1$ and ${d^{ST}_0}^2 - R^2 < 0$, the discriminant is always non-negative, ensuring real solutions. Retaining only the positive root for physical feasibility, we have

    \begin{align}\label{ttf}
    \begin{split}
     & \| T_0 T_{t_f}\| = \dfrac{\nu R - {d^{ST}_0} \cos(\gamma^T - \theta^{ST}_0) }{(1-\nu^2)}  \\& \hspace{0.15cm} + \dfrac{\sqrt{(\nu R - d^{ST}_0 \cos(\gamma^T - \theta^{ST}_0))^2 - (1 - \nu^2)({d^{ST}_0}^2-R^2)}}{(1-\nu^2)}.
   \end{split}
   \end{align}
Finally, using $\|T_0 T_{t_f}\| = v^T t_f$ and $\Omega  = \nu R - {d^{ST}_0} \cos(\gamma^T - \theta^{ST}_0) $ we obtain Equation~\eqref{tesc}. 
This completes the proof.
   \end{proof}
\medskip   
Equation~\eqref{tesc} provides the time required for the target to escape $R$, given the engagement geometry and relative speeds. At this point, we would like to define the term \emph{sensable region}.
\begin{definition}[Sensable Region]\label{def:sensable}
{The \emph{sensable region} (denoted by $\mathcal{S}_t$) is the set of all points that the target can reach by moving at a constant speed $v^T$ and for some constant heading $\gamma^T \in [0,\, 2\pi)$ over the time interval $[0, t_f(\gamma^T)]$. Here, $t_f(\gamma^T)$ is the escape time given in Eq. (\ref{tesc}), marking the point at which the engagement ends with the target's escape. }

\hfill ${\blacksquare}$
\end{definition}

Figure \ref{fig:sregion} illustrates the Cartesian oval shape of the sensable region  $\mathcal{S}_0$  for all values of the heading $\gamma^T \in [0, \, 2\pi)$. 
\begin{figure}[!h]
    \centering
\includegraphics[width=\linewidth]{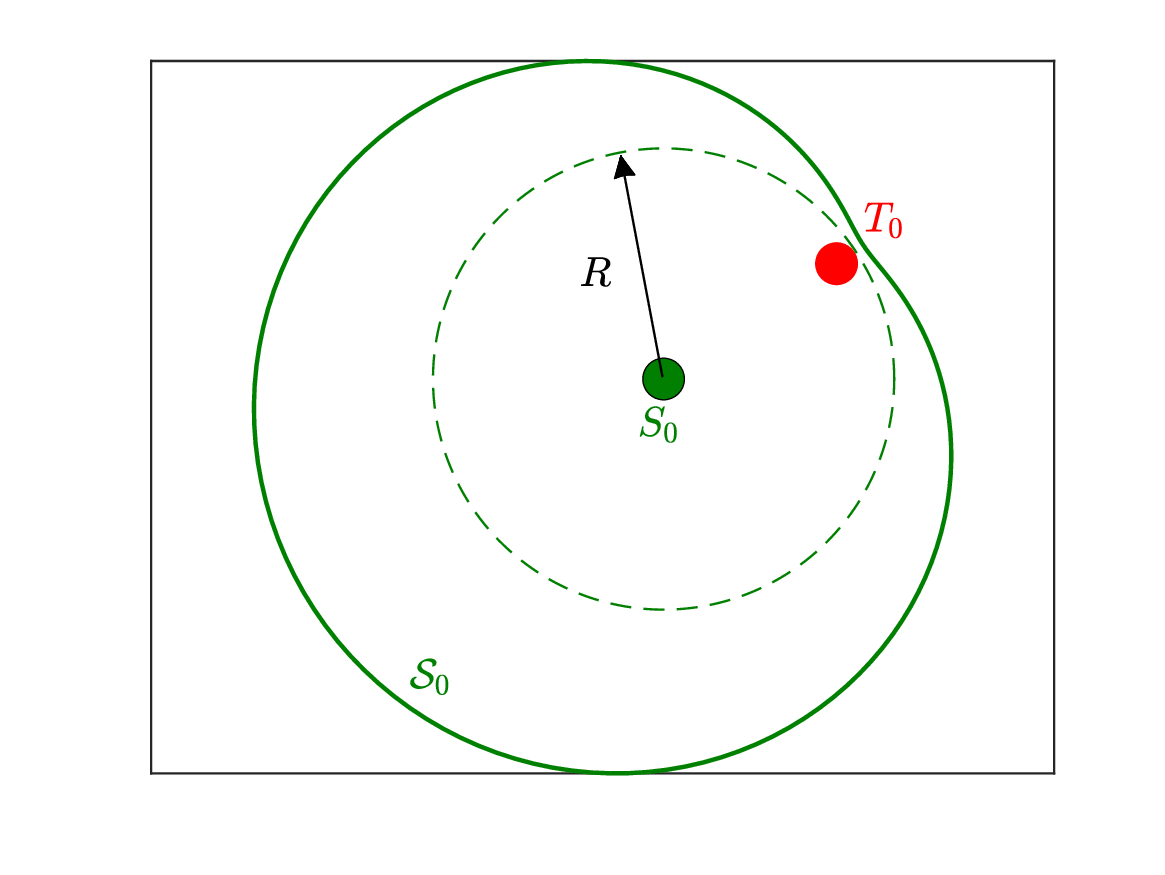}
    \caption{Visualization of the sensable region  $\mathcal{S}_0$  for the given initial positions of the agents and all values of the target heading $ \gamma^T \in [0, 2\pi) $. }
    \label{fig:sregion}
\end{figure}
\medskip
Theorem \ref{thm1} yields an expression for the minimum time the target $T$ needs to escape the sensable region, as summarized in the following corollary.
\begin{corollary}[\textbf{Minimum Time to Escape}]\label{cor1}
    The minimum time required for the target to escape the sensable region $\mathcal{S}_0$ from $t = 0$,  given the initial conditions, is achieved when the target moves directly away from the sensor, that is, $\gamma^T = \theta^{ST}_0$. In this case, the minimum escape time is given by 
    \begin{align}\label{tfmin}
        t_f^{\min} = \dfrac{R - d^{ST}_0}{v^T (1 - \nu)}.
    \end{align}
\end{corollary}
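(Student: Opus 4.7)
The plan is to view $t_f$ from Theorem~\ref{thm1} as a function of the single scalar $\gamma^T$ (all other quantities are initial data) and show that its minimum over $\gamma^T \in [0,2\pi)$ is attained at $\gamma^T = \theta^{ST}_0$. The argument is purely algebraic; no new optimal-control machinery is needed.

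First I would isolate the dependence on $\gamma^T$. In Eq.~\eqref{tesc}, $\gamma^T$ enters only through $\Omega(\gamma^T) = \nu R - d^{ST}_0 \cos(\gamma^T - \theta^{ST}_0)$. Writing
\begin{align*}
    t_f(\gamma^T) = \frac{\Omega + \sqrt{\Omega^2 + (1-\nu^2)(R^2 - {d^{ST}_0}^2)}}{v^T(1-\nu^2)},
\end{align*}
I would note that under Assumption~\ref{as:speeds} we have $\nu < 1$, and because the engagement starts with the target inside the sensing disc we have $R > d^{ST}_0$, so the constant term $(1-\nu^2)(R^2 - {d^{ST}_0}^2)$ under the radical is strictly positive. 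Hence the map $\Omega \mapsto \Omega + \sqrt{\Omega^2 + c}$ (with $c>0$) is strictly increasing on all of $\mathbb{R}$, since its derivative $1 + \Omega/\sqrt{\Omega^2+c}$ is positive. Therefore minimizing $t_f$ is equivalent to minimizing $\Omega$, which in turn is equivalent to maximizing $\cos(\gamma^T - \theta^{ST}_0)$; the unique maximizer is $\gamma^T = \theta^{ST}_0$, corresponding geometrically to the target fleeing directly away from the sensor along the initial line of sight.

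Next I would substitute $\gamma^T = \theta^{ST}_0$ back into Eq.~\eqref{tesc} and simplify. With $\cos(\gamma^T-\theta^{ST}_0)=1$, we get $\Omega = \nu R - d^{ST}_0$, and the discriminant becomes
\begin{align*}
    (\nu R - d^{ST}_0)^2 + (1-\nu^2)(R^2-{d^{ST}_0}^2) = (R - \nu d^{ST}_0)^2,
\end{align*}
a perfect square. Since $R > d^{ST}_0 > \nu d^{ST}_0$, the square root equals $R-\nu d^{ST}_0$. Plugging into the numerator yields $(\nu R - d^{ST}_0) + (R - \nu d^{ST}_0) = (1+\nu)(R - d^{ST}_0)$, and dividing by $v^T(1-\nu^2) = v^T(1-\nu)(1+\nu)$ gives exactly Eq.~\eqref{tfmin}.

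I do not anticipate any real obstacle here; the only step that requires care is verifying that the expression under the radical remains non-negative for \emph{every} $\gamma^T$ (so that $t_f$ is well-defined and real on the whole domain, making the monotonicity argument legitimate), which is already established in the proof of Theorem~\ref{thm1}. A brief geometric sanity check, noting that when the target moves radially outward the relative radial speed is $v^T - v^S = v^T(1-\nu)$ and it needs to cover the radial gap $R - d^{ST}_0$, confirms Eq.~\eqref{tfmin} independently and can be included as a one-line remark.
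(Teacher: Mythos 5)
Your proof is correct, and it takes a genuinely different route from the paper's. The paper works with the implicit quadratic \eqref{quad} for $\|T_0 T_{t_f}\|$: it differentiates that quadratic with respect to $\gamma^T$, solves for $\frac{d\|T_0T_{t_f}\|}{d\gamma^T}$, sets it to zero, and extracts the first-order condition $\sin(\gamma^T - \theta^{ST}_0) = 0$ before substituting back. You instead exploit the explicit solution \eqref{tesc} directly: observing that $\gamma^T$ enters only through $\Omega$ and that $\Omega \mapsto \Omega + \sqrt{\Omega^2 + c}$ with $c = (1-\nu^2)(R^2 - {d^{ST}_0}^2) > 0$ is strictly increasing, you reduce the problem to minimizing $\Omega$, i.e.\ maximizing $\cos(\gamma^T - \theta^{ST}_0)$. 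Your route buys something concrete: the paper's stationarity condition $\sin(\gamma^T - \theta^{ST}_0)=0$ admits two solutions in $[0,2\pi)$, namely $\theta^{ST}_0$ and $\theta^{ST}_0 + \pi$ (the latter being the \emph{maximizer}), and the paper does not check a second-order condition to rule the latter out; your monotonicity argument identifies the global minimizer uniquely with no second-order analysis needed. Both proofs converge on the same final computation — the discriminant collapsing to the perfect square $(R - \nu d^{ST}_0)^2$ and the factor $(1+\nu)$ cancelling — and your closing sanity check via the relative radial speed $v^T(1-\nu)$ is a nice independent confirmation.
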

\begin{proof}[Proof of Corollary \ref{cor1}]
  To determine the target heading $\gamma^T$ that minimizes $\|T_0 T_{t_f}\|$, we differentiate Eq.~\eqref{ttf} with respect to $\gamma^T$ and set the derivative equal to zero. This is equivalent to differentiating the quadratic expression in Eq.~\eqref{quad} with respect to $\gamma^T$ and setting it equal to zero. Therefore,
\begin{align}\label{dfttf1}
    \begin{split}
        0 &= 2 (1 - \nu^2) \|T_0 T_{t_f}\| \dfrac{d\|T_0 T_{t_f}\|}{d\gamma^T}  \\
        &\hspace{0.5cm} + 2\Big[-d^{ST}_0 \sin (\gamma^T - \theta^{ST}_0) \|T_0 T_{t_f}\| \\
        &\hspace{1cm} + \big(d^{ST}_0 \cos(\gamma^T - \theta^{ST}_0) - \nu R\big) \dfrac{d\|T_0 T_{t_f}\|}{d\gamma^T}\Big].
    \end{split}
    \end{align}
    Rearranging terms, we solve for $\dfrac{d\|T_0 T_{t_f}\|}{d\gamma^T}$ to obtain
    \begin{align}\label{dfttf2}
        \dfrac{d\|T_0 T_{t_f}\|}{d\gamma^T} = \dfrac{d^{ST}_0 \sin(\gamma^T - \theta^{ST}_0) \|T_0 T_{t_f}\|}{(1 - \nu^2) \|T_0 T_{t_f}\| + d^{ST}_0 \cos(\gamma^T - \theta^{ST}_0) - \nu R}.
    \end{align}
    Setting $\dfrac{d\|T_0 T_{t_f}\|}{d\gamma^T} = 0$ for minimization leads to
    \begin{align}
        d^{ST}_0 \sin(\gamma^T - \theta^{ST}_0) \|T_0 T_{t_f}\| = 0.
    \end{align}
    This yields the following possible conditions,
    \begin{align}
        d^{ST}_0 = 0, \quad \sin(\gamma^T - \theta^{ST}_0) = 0, \quad \text{or} \quad \|T_0 T_{t_f}\| = 0.
    \end{align}
    The non-trivial solution that minimizes $\|T_0 T_{t_f}\|$ is $\sin(\gamma^T - \theta^{ST}_0) = 0$, which gives:
    \begin{align}\label{headminT}
        \gamma^T = \theta^{ST}_0.
    \end{align}
    Substituting $\gamma^T = \theta^{ST}_0$ into Eq.~\eqref{ttf} leads to the minimum distance $T$ needs to travel in order to escape $R$. Thus,
\begin{align}\label{ttfmin}
        \|T_0 T_{t_f}\|_{\min} &= \dfrac{R - d^{ST}_0}{1 - \nu},
    \end{align}
    and therefore, the minimum escape time is $t_f^{\min} =  \|T_0 T_{t_f}\|_{\min}/v^T $ which leads to Eq. (\ref{tfmin}).
\end{proof}
\medskip
So far, we have discussed the optimal sensor strategy and analyzed the interactions between the sensor and the target -- specifically, how the sensor attempts to keep the target within range while the target tries to evade detection. In the next subsection, we present the derivation of the attacker's strategy for capturing the target before it escapes.
\subsection{Strategy for $A$}
The optimal strategy for $A$ is to capture the target $T$ while it remains within the sensing radius $R$ of the sensor $S$. Consider the engagement between $A$ and $T$. Let us start by defining the Apollonius circle between the agents.

\medskip
\begin{definition}[Apollonius Circle between $A$ and $T$]\label{def:ap}
The \emph{Apollonius circle} between the attacker $A$ and the target $T$ at time $t$ is defined as the set of all points $\hat{P} \in \mathbb{R}^2$ in the plane for which the ratio of the distances from $A_t$ and from $T_t$ equals the speed ratio $\mu = \dfrac{v^T}{v^A}$. Denoting this set by $\mathcal{A}_t$, we write:
\begin{align}\label{Apollonius_AT} 
\mathcal{A}_t = \left\{ \hat{P} {\in \mathbb{R}^2} \,\middle|\, \dfrac{\|\hat{P} A_t\|}{\|\hat{P} T_t\|} = \mu \right\}. 
\end{align}\flushright $\blacksquare$ 
\end{definition}

The center and radius of the Apollonius circle depend on the speed ratio $\mu$ and the instantaneous separation between $A$ and $T$. Let $\mathcal{C}^{Ap}_t$ and $R^{Ap}_t$ denote the center and radius of $\mathcal{A}_t$ at time $t$. Then, {it follows from \cite{weintraub2020optimal}, that}
\begin{align} 
\mathcal{C}^{Ap}_t &= \left[T^x_t + \dfrac{(T^x_t - A^x_t) \mu^2}{1 - \mu^2},\, T^y_t + \dfrac{(T^y_t - A^y_t) \mu^2}{1 - \mu^2} \right], \label{cAP} \\
R^{Ap}_t &= \dfrac{\mu}{(1 - \mu^2)} d^{AT}_t. \label{Rap}
\end{align}
 
The points contained inside $\mathcal{A}_t$ correspond to those locations that $T$ can reach before $A$, while the points outside the circle indicate regions where $A$ arrives first. The {boundary} of the circle defines the set of points {that can be reached simultaneously by both agents}. 
Figure~\ref{fig:dirA} illustrates the Apollonius circle between $A$ and $T$ at $t = 0$, centered at $\mathcal{C}^{Ap}_0$ and with radius $R^{Ap}_0$.

For the attacker $A$ to intercept the target $T$ at its final position $T_{t_f}$ in minimum time, $T_{t_f}$ must lie on the Apollonius circle $\mathcal{A}_{t_f}$ and be {inside} the sensable region $\mathcal{S}_{t_f}$.

This condition ensures that both $A$ and $T$ reach $T_{t_f}$ simultaneously. Therefore, the following relation must hold
\begin{align}\label{ap}
\dfrac{\|A_0 T_{t_f}\|}{v^A} = \dfrac{\|T_0 T_{t_f}\|}{v^T} \implies \dfrac{\|T_0 T_{t_f}\|}{\|A_0 T_{t_f}\|} = \dfrac{v^T}{v^A} = \mu. \end{align}
Note that, from the definition of $\mathcal{A}_t$, the optimal direction for the attacker, say $\gamma^{A\star}$,  to intercept the target in minimum time is achieved by heading towards the final position of the target ($T_{t_f}$), at the boundary of $\mathcal{A}_t$.
         \begin{figure}[!h]
        \centering   \includegraphics[width=0.85\linewidth]{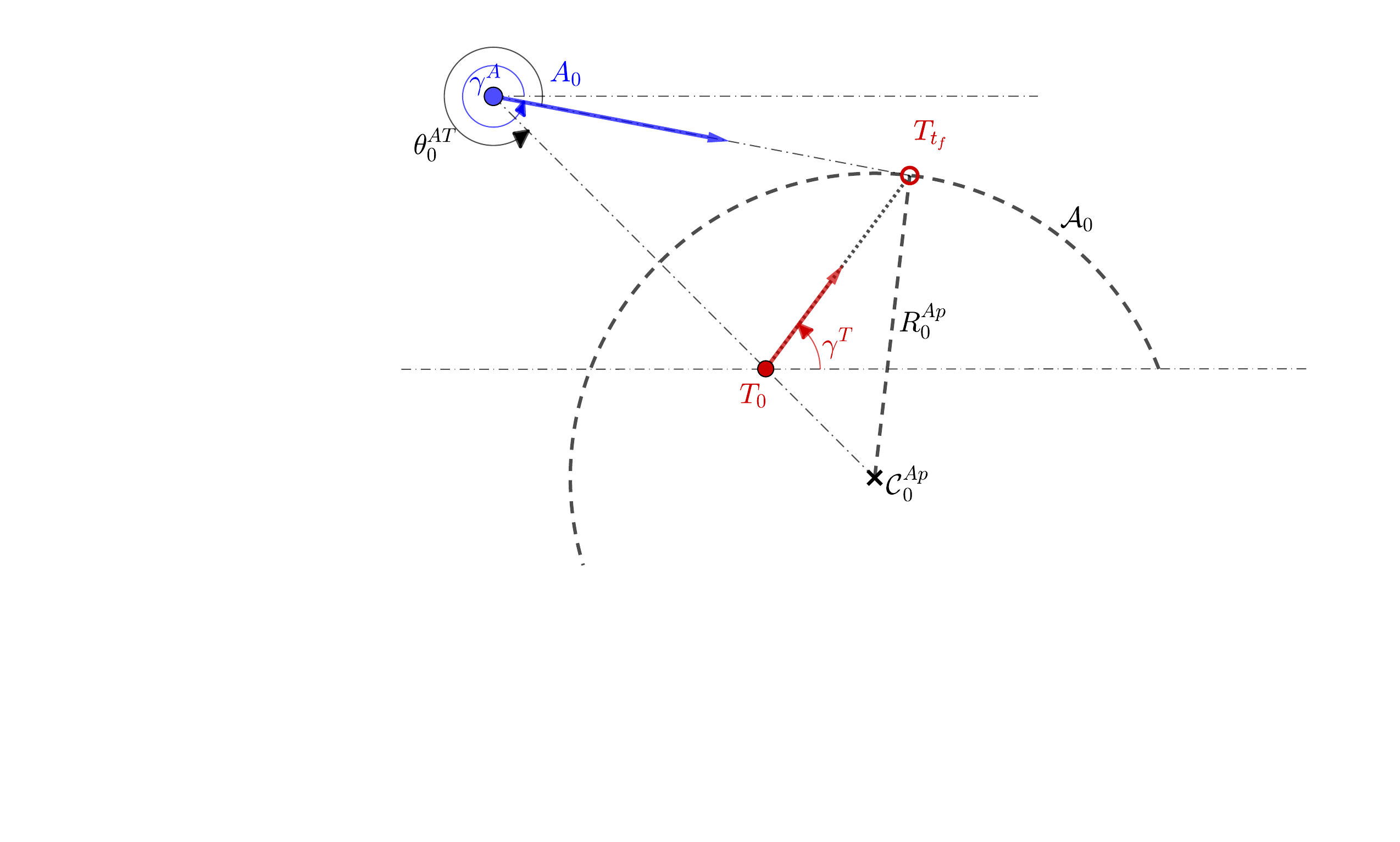}
        \caption{Attacker-target engagement with Apollonius circle.}
        \label{fig:dirA}
    \end{figure}
Applying the law of sines to $\triangle A_0 T_0 T_{t_f}$, as shown in Fig. \ref{fig:dirA}, we have:
\begin{align}\label{ap1}
    \dfrac{\|A_0 T_{t_f}\|}{\sin\left(\pi - (\theta^{AT}_0 - \gamma^T)\right)} = \dfrac{\|T_0 T_{t_f}\|}{\sin(\theta^{AT}_0 - \gamma^{A\star})}.
\end{align}
Simplifying Eq.~\eqref{ap1} and using Eq.~\eqref{ap}, we obtain
\begin{align}\label{sgmA}
    \sin(\theta^{AT}_0 - \gamma^{A\star}) = \mu \sin(\theta^{AT}_0 - \gamma^T).
\end{align}
Rearranging the above yields the optimal heading for $A$ as
\begin{align}\label{gammaA}
    \gamma^{A\star} = \theta^{AT}_0- \sin^{-1}\left[\mu \sin(\theta^{AT}_0 - \gamma^T)\right].
\end{align}
Thus, Eq. (\ref{gammaA}) provides the attacker's heading that ensures target interception in minimum time, provided the target remains within $\mathcal{S}_t$, for all $t \geq 0$.

Having presented the optimal strategies and their key properties for both the attacker and the sensor, we are now prepared to address the solution to the game of kind, in the following section.

\section{Main results}\label{SecIV}
In this section, we establish the necessary and sufficient conditions for target capture, as well as a set of conditions -- parameterized by the initial geometry among the agents -- that determine whether the team of attacker and sensor wins or the target wins.
\begin{lemma}[Capture Condition]\label{thmA0S0}
Suppose that the sensor-attacker team follows the strategies $\gamma^{S\star}$ and $\gamma^{A\star}$, given by Eqs. (\ref{eq:gammastar}) and (\ref{gammaA}), respectively. Then, a necessary and sufficient condition for guaranteed target capture is that $\mathcal{A}_0 \subset \mathcal{S}_0$.
\end{lemma}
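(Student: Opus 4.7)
The plan is to reduce the planar game to a one-dimensional comparison along the target's straight-line trajectory. Since the target is passive, its motion from $T_0$ in a chosen direction $\gamma^T$ traces a ray, and the outcome is decided by which of two ray-boundary crossings occurs first: the crossing of $\partial\mathcal{A}_0$ (where the attacker, following $\gamma^{A\star}$, meets the target) and the crossing of $\partial\mathcal{S}_0$ (where the target leaves the sensing disc).

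First, I would establish two structural facts. (i) The initial target position $T_0$ lies strictly inside $\mathcal{A}_0$, since from Eqs.~(\ref{cAP})--(\ref{Rap}) the distance from $T_0$ to $\mathcal{C}^{Ap}_0$ equals $\tfrac{\mu^2}{1-\mu^2}\,d^{AT}_0$, which is strictly less than $R^{Ap}_0 = \tfrac{\mu}{1-\mu^2}\,d^{AT}_0$ because $\mu<1$. (ii) $\mathcal{S}_0$ is star-shaped with respect to $T_0$. This follows directly from Definition~\ref{def:sensable}: every point of $\mathcal{S}_0$ lies on a segment emanating from $T_0$ along some heading, so shrinking the traveled distance along that same heading stays inside $\mathcal{S}_0$. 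Together these facts guarantee that every ray from $T_0$ intersects $\partial\mathcal{A}_0$ in a unique point $P^{\mathrm{Ap}}(\gamma^T)$ and intersects $\partial\mathcal{S}_0$ in a unique point $P^{\mathrm{esc}}(\gamma^T)$, with the portion of the ray inside the sensing disc being exactly the segment $[T_0,\,P^{\mathrm{esc}}(\gamma^T)]$ of length $v^T t_f(\gamma^T)$ by Theorem~\ref{thm1}.

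Next, by the Apollonius construction and Eq.~(\ref{gammaA}), the attacker following $\gamma^{A\star}$ arrives at $P^{\mathrm{Ap}}(\gamma^T)$ at the same instant as the target. The engagement outcome therefore reduces to comparing the two scalar travel distances $\|T_0\,P^{\mathrm{Ap}}(\gamma^T)\|$ and $\|T_0\,P^{\mathrm{esc}}(\gamma^T)\|$ along the common ray: capture occurs if and only if $P^{\mathrm{Ap}}(\gamma^T)\in\mathcal{S}_0$. Sufficiency of $\mathcal{A}_0\subset\mathcal{S}_0$ is then immediate, since the inclusion forces $\partial\mathcal{A}_0\subset\mathcal{S}_0$, so that $P^{\mathrm{Ap}}(\gamma^T)\in\mathcal{S}_0$ for every $\gamma^T\in[0,2\pi)$.

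For necessity I argue the contrapositive. Assume $\mathcal{A}_0\not\subset\mathcal{S}_0$, and pick any $Q\in\mathcal{A}_0\setminus\mathcal{S}_0$. Let $\gamma^T$ be the heading of the ray from $T_0$ through $Q$. By star-shapedness of $\mathcal{S}_0$, the exit point $P^{\mathrm{esc}}(\gamma^T)$ lies strictly between $T_0$ and $Q$ on that ray; by $Q\in\mathcal{A}_0$ together with convexity of the Apollonius disc, $P^{\mathrm{Ap}}(\gamma^T)$ lies at $Q$ or beyond it on the ray. Hence $\|T_0\,P^{\mathrm{esc}}(\gamma^T)\|<\|T_0\,P^{\mathrm{Ap}}(\gamma^T)\|$, so the target exits the sensing disc before the attacker can reach it, giving escape. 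The step I expect to be the main obstacle is cleanly reconciling the strict-vs-closed boundary conventions between the capture condition $d^{ST}_t<R$ and the set inclusion $\mathcal{A}_0\subset\mathcal{S}_0$; I would handle this by working with the closed sensable region for the necessity direction and the open interior for sufficiency, so that interception at a point on $\partial\mathcal{S}_0$ is unambiguously classified.
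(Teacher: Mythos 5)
Your proposal is correct and follows essentially the same route as the paper: sufficiency because containment of the Apollonius set in $\mathcal{S}_0$ prevents the target from reaching any exterior point before the attacker, and necessity by aiming the target along the ray through an exterior point of $\mathcal{A}_0$ and noting that the exit from $\mathcal{S}_0$ occurs strictly before the interception point. Your version merely makes explicit the facts the paper uses implicitly ($T_0$ lies strictly inside $\mathcal{A}_0$, star-shapedness of $\mathcal{S}_0$ about $T_0$, and the ray-wise distance comparison), which is a welcome tightening rather than a different argument.
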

\begin{proof}[Proof of Lemma \ref{thmA0S0}]
From Definition \ref{def:ap}, we know that any potential capture point must lie on the boundary of the Apollonius circle $\mathcal{A}_t$. At the initial time $t = 0$, the target $T$ can select any constant heading in an attempt to escape the sensable region. However, if the entire Apollonius circle $\mathcal{A}_0$ is contained within the sensable region $\mathcal{S}_0$, then for any value of $\gamma^T$ chosen by $T$, it cannot avoid detection before being captured. In this case, the target cannot reach any point outside $\mathcal{S}_0$ faster than the attacker. Therefore, the attacker, by following its optimal strategy $\gamma^{A\star}$, is guaranteed to capture the target.

Conversely, suppose there exists a point $\hat{P} \in \mathcal{A}_0$ that lies outside the sensable region $\mathcal{S}_0$. Then, {there exists a point $T_{t_f}$ on the boundary of $\mathcal{S}_0$ such that $T_0, T_{t_f}$ and $\hat{P}$ are collinear and $\|T_0T_{t_f}\| < \|T_0\hat{P}\|$. In this case, the target can adopt the strategy $\bar{\gamma}^T := v^T(\hat{P} - T_0)/\|\hat{P} - T_0\|$, which ensures it reaches $T_{t_f}$ before it gets intercepted}. Therefore, the condition $\mathcal{A}_0 \subset \mathcal{S}_0$ is both necessary and sufficient for guaranteed capture under the proposed strategies $\gamma^{S\star}$ and $\gamma^{A\star}$.
\end{proof}

\medskip 

While Lemma \ref{thmA0S0} gives a necessary and sufficient condition for capture, it does not prescribe any explicit conditions on the problem parameters that determine the winning outcomes.  {However, Lemma \ref{thmA0S0} prescribes an escape strategy for the target whenever it exists. } {Moreover, Lemma \ref{thmA0S0} plays a crucial role in establishing the subsequent results that define conditions on the target speed $v^T$ under which capture or escape is guaranteed. }
 
\begin{theorem}[{Sensor-Attacker Winning Condition}]\label{thmVT}
   Suppose that the sensor and the attacker move as per the strategies $\gamma^{S\star}$ and $\gamma^{A\star}$ defined in Eqs. (\ref{eq:gammastar}) and (\ref{gammaA}), respectively, where the target speed $v^T$, sensor speed $v^S$, and the attacker speed $v^A$ satisfy Assumption~\ref{as:speeds} and the sensing radius $R$ exceeds the initial distance $d^{ST}_0$ between the sensor and the target. 
    If
 \begin{align}\label{vTmax}
     v^T < \underline{v}^T, 
 ~\text{where}~ \underline{v}^T = \dfrac{(R - d^{ST}_0) v^A + d^{AT}_0 v^S}{d^{AT}_0 + R - d^{ST}_0 },
    \end{align}
   then capture is guaranteed.
   Additionally, if Eq.~\eqref{vTmax} does not hold, then there exist initial locations for the agents and a strategy $\gamma^T$ that leads to escape for $T$.
\end{theorem}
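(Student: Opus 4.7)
The plan is to reduce both directions to the geometric containment $\mathcal{A}_0 \subset \mathcal{S}_0$, since Lemma~\ref{thmA0S0} already shows this is necessary and sufficient for guaranteed capture. My first step is to rewrite the threshold $v^T < \underline{v}^T$ in a geometrically transparent form. Multiplying out the denominator and regrouping gives the equivalent statement
\begin{align*}
\frac{\mu \, d^{AT}_0}{1 - \mu} \;<\; \frac{R - d^{ST}_0}{1 - \nu}.
\end{align*}
The left-hand side equals $\max_{P \in \mathcal{A}_0} \|T_0 P\|$, the farthest distance from $T_0$ to any point of the Apollonius disk, as can be read off from Eqs.~\eqref{cAP} and \eqref{Rap}. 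The right-hand side is the minimum distance the target must travel to exit the sensable region, from Corollary~\ref{cor1}. Thus the inequality is a comparison of two natural geometric quantities.

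For the forward direction, I would first characterize $\mathcal{S}_0$ analytically: combining the optimal sensor strategy from Lemma~\ref{lemma1} with the law of cosines in $\triangle S_0 T_0 P$, exactly as in the manipulation leading to Eq.~\eqref{esc2}, yields
\begin{align*}
\mathcal{S}_0 \;=\; \{ P \in \mathbb{R}^2 \,:\, \|S_0 P\| - \nu \|T_0 P\| \leq R \}.
\end{align*}
Now fix any $P \in \mathcal{A}_0$. The triangle inequality $\|S_0 P\| \leq d^{ST}_0 + \|T_0 P\|$ together with the diameter bound $\|T_0 P\| \leq \mu \, d^{AT}_0/(1 - \mu)$ and the reformulated hypothesis gives
\begin{align*}
\|S_0 P\| - \nu \|T_0 P\| \;\leq\; d^{ST}_0 + (1 - \nu)\|T_0 P\| \;\leq\; R,
\end{align*}
so $P \in \mathcal{S}_0$. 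This establishes $\mathcal{A}_0 \subset \mathcal{S}_0$, and Lemma~\ref{thmA0S0} delivers guaranteed capture.

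For the converse, I would exhibit an explicit configuration that both saturates the triangle inequality and realizes escape whenever $v^T \geq \underline{v}^T$. Place $T_0$ at the origin with $S_0 = (-d^{ST}_0, 0)$ and $A_0 = (-d^{AT}_0, 0)$, so that the three agents are collinear with $S_0$ and $A_0$ on the same side of $T_0$, and let the target choose $\gamma^T = 0$. Lemma~\ref{lemma1} forces $\gamma^{S\star} = 0$, Eq.~\eqref{gammaA} forces $\gamma^{A\star} = 0$, and the engagement collapses to a one-dimensional chase. Corollary~\ref{cor1} gives the escape time $(R - d^{ST}_0)/(v^T - v^S)$, while a direct head-on calculation gives the capture time $d^{AT}_0/(v^A - v^T)$; requiring the former to be at most the latter and clearing denominators reproduces exactly $v^T \geq \underline{v}^T$, so under the converse hypothesis the target escapes in this configuration. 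The main obstacle I anticipate is spotting the algebraic re-expression of $\underline{v}^T$ at the outset; once that identity is in hand, both directions reduce to essentially one-line geometric arguments built on Lemma~\ref{thmA0S0}, and the collinear configuration simultaneously certifies that the triangle-inequality step in the forward argument is tight, so the threshold cannot be improved without further assumptions on the initial geometry.
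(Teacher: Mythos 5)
Your proof of the capture direction is correct and follows essentially the same route as the paper: both arguments rewrite $v^T < \underline{v}^T$ as $\tfrac{\mu}{1-\mu}d^{AT}_0 < \tfrac{R-d^{ST}_0}{1-\nu}$, recognize the left side as the farthest point of $\mathcal{A}_0$ from $T_0$ (via Eqs.~\eqref{cAP} and \eqref{Rap}) and the right side as the minimum escape distance from Corollary~\ref{cor1}, conclude $\mathcal{A}_0 \subset \mathcal{S}_0$, and invoke Lemma~\ref{thmA0S0}. Where you differ is in how the containment is certified: the paper argues that the maximum distance from $T_0$ to $\partial\mathcal{A}_0$ is below the minimum distance from $T_0$ to $\partial\mathcal{S}_0$ (implicitly using that $\mathcal{S}_0$ is star-shaped about $T_0$), whereas you first derive the analytic description $\mathcal{S}_0=\{P:\|S_0P\|-\nu\|T_0P\|\le R\}$ of the Cartesian oval and then verify membership pointwise by the triangle inequality; your version is slightly more self-contained and makes explicit why the comparison of the two radii suffices. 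More significantly, you supply a proof of the second claim (``Additionally, \ldots''), which the paper's proof does not address at all: your collinear configuration with $\gamma^T=\theta^{ST}_0=\theta^{AT}_0$ correctly reduces the engagement to a one-dimensional race whose tie-breaking condition is exactly $v^T \ge \underline{v}^T$, and it also shows the threshold is tight. The only quibble is the boundary case $v^T=\underline{v}^T$: there the escape time equals the capture time, so by the paper's termination definitions ($d^{AT}_{t}>0$ for all $t\in[0,t_f]$ for escape, $d^{ST}_t<R$ for all $t\in[0,t_f]$ for capture) the outcome is a degenerate simultaneous arrival rather than a clean escape; you should either restrict that clause to strict inequality or perturb the configuration slightly.
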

\begin{remark}[Target Objectives in Case of No Escape]
If the initial conditions are such that the target $T$ cannot escape, the target’s strategy depends upon whether it aims to end the engagement quickly or delay interception. To minimize the time to capture, the target should move directly toward the attacker, that is,  $\gamma^T = -\theta^{AT}_0$. Conversely, to maximize the time until interception, the target should move away from the attacker, leading to $\gamma^T = \theta^{AT}_0$. 
\end{remark}

\begin{proof}[Proof of Theorem \ref{thmVT}] Consider a typical engagement scenario among the agents $S$, $A$, and $T$, as shown in Fig.~\ref{fig:capture}. The corresponding sensable region $\mathcal{S}_0$ -- defined with respect to $S_0$ and $T_0$ -- and the Apollonius circle $\mathcal{A}_0$, constructed based on $A_0$ and $T_0$, are also illustrated. 
 
Assume that, at $t = 0$, the target's speed satisfies Eq. (\ref{vTmax}).
This inequality can be rearranged as
\begin{align*}
    v^T (d^{AT}_0 + R - d^{ST}_0) &- (R - d^{ST}_0) v^A + d^{AT}_0 v^S < 0.
\end{align*}
Simplifying further, we obtain
\begin{align*}
    \dfrac{v^T}{v^A} d^{AT}_0 \left(1 - \dfrac{v^S}{v^T} \right) - (R - d^{ST}_0) \left(1 - \dfrac{v^T}{v^A} \right) < 0.
\end{align*}
Substituting $\mu = \dfrac{v^T}{v^A}$ and $\nu = \dfrac{v^S}{v^T}$ with some algebraic manipulations lead to
\begin{align}
    \label{65}
    \dfrac{\mu}{1 - \mu} d^{AT}_0 - \dfrac{R - d^{ST}_0}{1 - \nu} < 0.
\end{align}
According to Corollary~\ref{cor1}, the minimum distance from the target's initial position $T_0$ to a possible escape point $T_{t_f}$ on the boundary of $\mathcal{S}_0$ is given by Eq. (\ref{ttfmin}).

Next, we rewrite the first term in Eq.~(\ref{65}) as
\begin{align*}
    \dfrac{\mu }{1 - \mu}d^{AT}_0 = \left(\dfrac{\mu}{1 - \mu^2} + \dfrac{\mu^2}{1 - \mu^2} \right) d^{AT}_0.
\end{align*}
From Eq.~(\ref{cAP}), we can calculate the distance from the center of the Apollonius circle to the target’s initial position, which is given by
\begin{align}\label{CT0}
    \|\mathcal{C}^{Ap}_0 - T_0\| = \dfrac{\mu^2}{1 - \mu^2} d^{AT}_0.
\end{align}
Substituting Eqs.~(\ref{ttfmin}), (\ref{Rap}), and (\ref{CT0}) into Eq.~(\ref{65}) yields
\begin{align}\label{ApS1}
    R^{Ap}_0 + \|\mathcal{C}^{Ap}_0 - T_0\| < \|T_0 T_{t_f}\|_{\min}.
\end{align}
\begin{figure}[!h]
        \centering
\includegraphics[width=\linewidth]{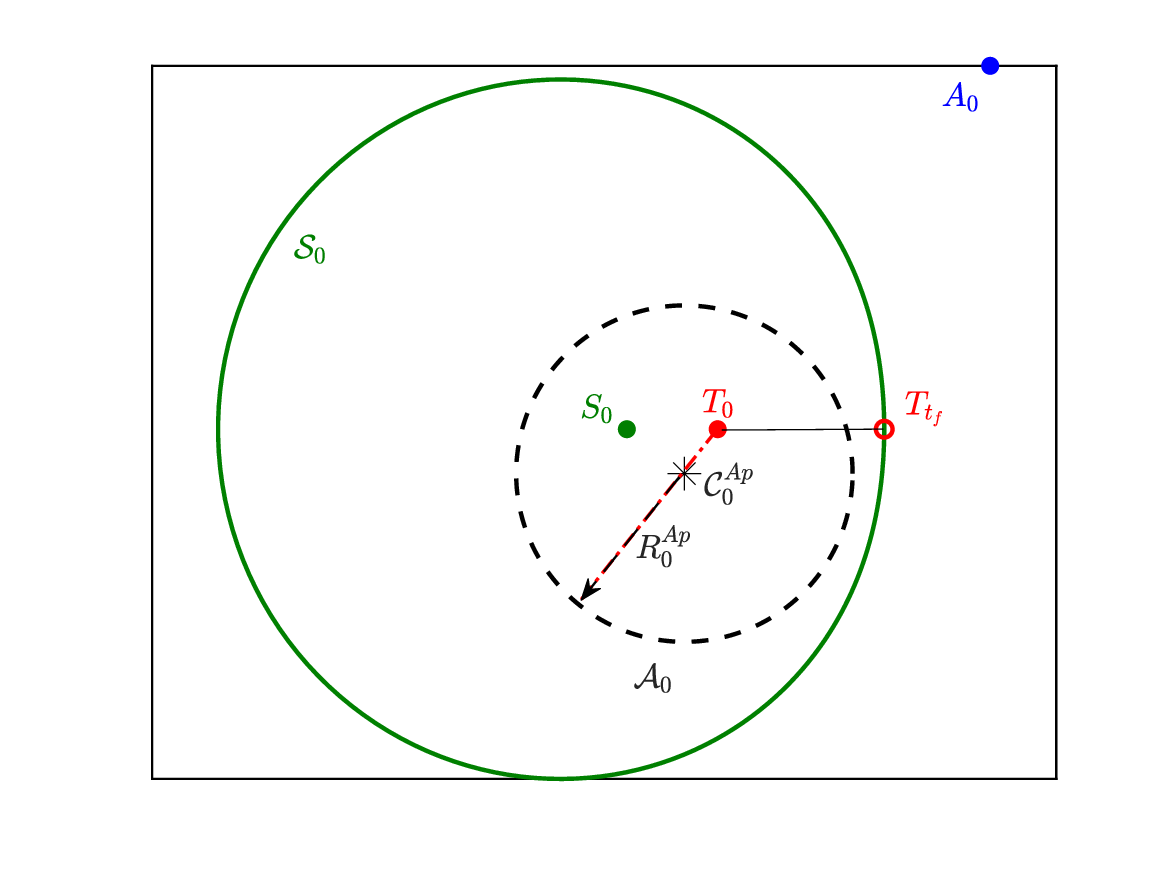}
           \caption{A typical engagement scenario between the three agents, showing $\mathcal{A}_0 \subset \mathcal{S}_0.$}
        \label{fig:capture}
    \end{figure}
Equation~(\ref{ApS1}) implies that the maximum distance from $T_0$ to any point on the boundary of $\mathcal{A}_0$ is less than the minimum distance between $T_0$ and the boundary of $\mathcal{S}_0$, as shown in Fig. \ref{fig:capture}. In other words, the entire Apollonius region $\mathcal{A}_0$ lies strictly within the sensable region $\mathcal{S}_0$, that is $\mathcal{A}_0 \subset \mathcal{S}_0$.
Lemma~\ref{thmA0S0} completes the proof.
\end{proof}
\medskip 

Although we have established a sufficient condition that guarantees target capture, it is equally important to characterize parameter regimes where the escape of the target is guaranteed. The following result presents a condition on the target speed $v^T$ beyond which the target is guaranteed to escape despite the optimal strategies of the sensor-attacker team. 
\begin{theorem}[{Target Winning Condition}]\label{thmesc}
 Suppose the sensor and attacker follow the strategies $\gamma^{S\star}$ and $\gamma^{A\star}$ defined in Eqs.~(\ref{eq:gammastar}) and (\ref{gammaA}), respectively. Let the target speed $v^T$, sensor speed $v^S$, and attacker speed $v^A$ satisfy Assumption~\ref{as:speeds}, and assume the sensing radius $R$ exceeds the initial sensor-target distance $d_0^{ST}$. If
 \begin{align}\label{vTmin}
  v^T > \overline{v}^T,
    \text{where}~\overline{v}^T &= \dfrac{(R - d_0^{ST}) v^A + d_0^{AT} v^S}{d_0^{AT} - R + d_0^{ST}},
\end{align}
then there exists a target strategy $\gamma^T$ that guarantees escape.
\end{theorem}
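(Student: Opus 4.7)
The plan is to mirror the algebraic-to-geometric conversion used in the proof of Theorem~\ref{thmVT}, but with the inequality sign reversed. I would begin by taking the hypothesis $v^T > \overline{v}^T$ and manipulating it, clearing the denominator $d_0^{AT} - R + d_0^{ST}$ and regrouping using $\mu = v^T/v^A$ and $\nu = v^S/v^T$, until it is reduced to the form
\begin{align*}
\frac{\mu}{1+\mu}\, d_0^{AT} \;>\; \frac{R - d_0^{ST}}{1 - \nu}.
\end{align*}
This is the natural dual of the inequality used in the capture proof: the left-hand side now corresponds to the \emph{minimum}, rather than the \emph{maximum}, extent of the Apollonius disk $\mathcal{A}_0$ from $T_0$.

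Next, I would attach geometric meaning to both sides. Using Eqs.~\eqref{cAP}--\eqref{Rap}, the quantity $\frac{\mu}{1+\mu} d_0^{AT}$ equals $R^{Ap}_0 - \|\mathcal{C}^{Ap}_0 - T_0\|$, which is precisely the minimum distance from $T_0$ to the boundary $\partial\mathcal{A}_0$. By Corollary~\ref{cor1}, the right-hand side is $\|T_0 T_{t_f}\|_{\min}$, the minimum distance from $T_0$ to $\partial\mathcal{S}_0$, attained by the heading $\gamma^T = \theta_0^{ST}$. The inequality therefore states that the nearest point of $\partial\mathcal{S}_0$ from $T_0$ is strictly closer than the nearest point of $\partial\mathcal{A}_0$.

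I would use this to exhibit the explicit escape strategy $\gamma^T = \theta_0^{ST}$. Let $T_{t_f}$ denote the corresponding exit point on $\partial\mathcal{S}_0$. Since $T_0$ lies strictly interior to the convex disk $\mathcal{A}_0$ and the open ball of radius $\frac{\mu}{1+\mu} d_0^{AT}$ about $T_0$ is contained in the interior of $\mathcal{A}_0$, the strict inequality forces $T_{t_f} \in \mathrm{int}(\mathcal{A}_0)$. By convexity of $\mathcal{A}_0$, the entire segment from $T_0$ to $T_{t_f}$ lies inside $\mathcal{A}_0$, so for every point $Q$ on this segment, $\|T_0 Q\|/v^T < \|A_0 Q\|/v^A$. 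This implies that the target reaches every point of its rectilinear path before the attacker, ensuring $d^{AT}_t > 0$ for all $t \in [0, t_f]$, while $d^{ST}_{t_f} = R$ by construction of $T_{t_f}$, which is exactly the escape condition stated in Section~\ref{SecII}.

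The principal delicacy I anticipate is the \emph{en route} safety argument: simply beating the attacker to the terminal point $T_{t_f}$ would not, by itself, rule out interception at an intermediate time. The convexity of the Apollonius disk is what promotes the pointwise Apollonius inequality at $T_{t_f}$ into a uniform guarantee over the whole segment and is essential to the argument. A secondary bookkeeping detail is verifying that the denominator $d_0^{AT} + d_0^{ST} - R$ in $\overline{v}^T$ is positive, so that the algebraic rearrangement preserves the direction of the inequality; otherwise the threshold $\overline{v}^T$ would be vacuous and the regime would need to be treated as a separate (but easier) case where the Apollonius disk already protrudes well outside the sensable region.
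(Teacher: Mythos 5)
Your proposal is correct and follows essentially the same route as the paper: the same algebraic rearrangement of $v^T > \overline{v}^T$ into $\tfrac{\mu}{1+\mu}d_0^{AT} > \tfrac{R-d_0^{ST}}{1-\nu}$, the same identification of the two sides as $R^{Ap}_0 - \|\mathcal{C}^{Ap}_0 - T_0\|$ and $\|T_0 T_{t_f}\|_{\min}$ via Eqs.~(\ref{Rap}), (\ref{CT0}) and (\ref{ttfmin}), and the same conclusion that $\mathcal{A}_0 \not\subset \mathcal{S}_0$ with escape along $\gamma^T = \theta^{ST}_0$. The only difference is presentational: where the paper closes by invoking Lemma~\ref{thmA0S0}, you verify the escape directly, and your convexity argument for en-route safety (together with the observation that the denominator $d_0^{AT}+d_0^{ST}-R$ must be positive for the rearrangement to be valid) makes explicit two details the paper leaves implicit.
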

\begin{remark}[Admissible Values of $\overline{v}^T$]
It is important to note that the target speed $v^T$ must also comply with Assumption \ref{as:speeds}, even when the target is attempting to escape.  Consequently, the inequality in Eq. (\ref{vTmin}) is meaningful only when the following condition holds
\begin{align}
   \dfrac{(R - d_0^{ST}) v^A + d_0^{AT} v^S}{d_0^{AT} - R + d_0^{ST}} < v^A. 
\end{align}
This condition corresponds to a regime in which the initial positions of the agents and the relative speed ratios satisfy the inequality $2(R - d_0^{{ST}})/d_0^{AT} < 1 - \nu \mu$. This ensures that the escape condition remains consistent with the bounded speed constraints imposed by the problem formulation.
\end{remark}
\begin{remark}[Difference between $\underline{v}^T$ and $\overline{v}^T$]
    Note that the right hand sides of Eq.~\eqref{vTmin} and Eq.~\eqref{vTmax} are very similar except for a sign difference in the denominator. Specifically, the right hand side of Eq.~\eqref{vTmin} is always higher than that of Eq.~\eqref{vTmax}. 
\end{remark}
\begin{figure}[!h]
    \centering
    \includegraphics[width=\linewidth]{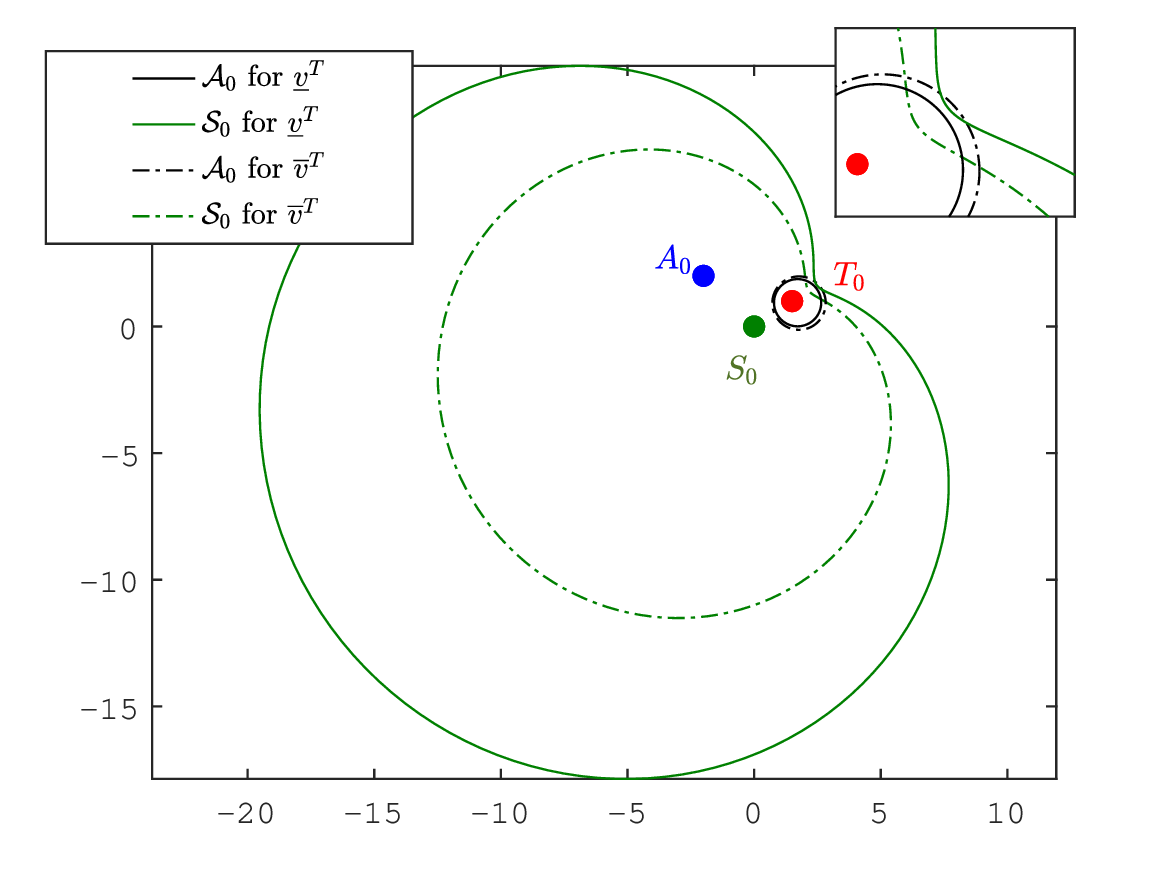}
    \caption{Visualization of the gap between $\underline{v}^T$ and $\overline{v}^T$ for the given initial positions of the sensor, attacker and the target.}
    \label{fig:remark3}
\end{figure}

\medskip
\begin{proof}[Proof of Theorem \ref{thmesc}]
\begin{figure}[!h]
        \centering
\includegraphics[width=\linewidth]{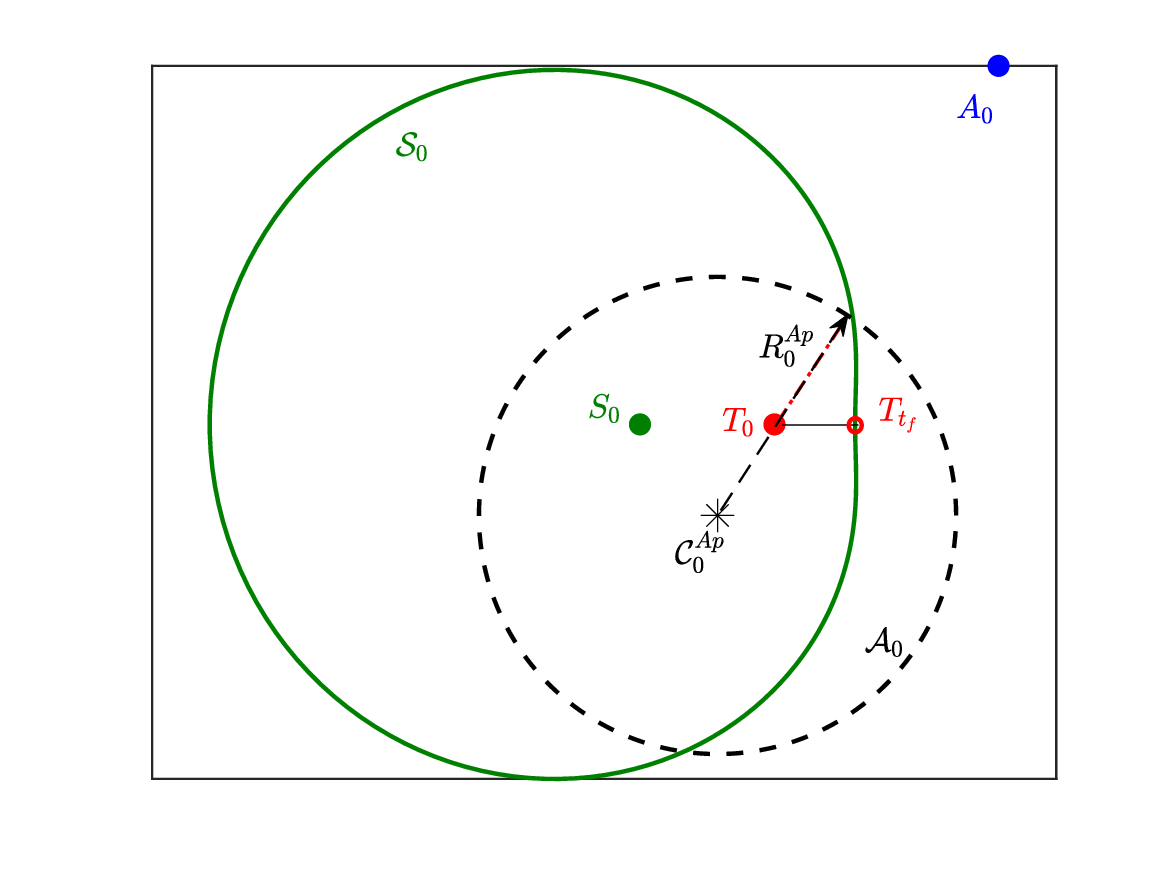}
           \caption{A typical engagement scenario between the three agents, showing $\mathcal{A}_0 \not \subset \mathcal{S}_0$.}
        \label{fig:escape}
    \end{figure}
   Assume that the initial conditions satisfy Eq.~\eqref{vTmin}. Rearranging the inequality gives
    \begin{align}
        v^T (d_0^{AT} - R + d_0^{ST}) > (R - d_0^{ST}) v^A + d_0^{AT} v^S.
    \end{align}
    Proceeding with steps analogous to those used in the proof of Theorem~\ref{thmVT}, we obtain
    \begin{align}\label{75}
        \dfrac{\mu}{1+\mu} d_0^{AT} - \dfrac{R - d_0^{ST}}{1 - \nu} > 0
    \end{align}
    The first term in the above inequality can be re-written as
    \begin{align}\label{ftm}
        \dfrac{\mu}{1 + \mu} d_0^{AT} = \left(\dfrac{\mu}{1 - \mu^2} - \dfrac{\mu^2}{1 - \mu^2}\right)d_0^{AT}.
    \end{align}
Using Eqs.~(\ref{ttfmin}), (\ref{Rap}), (\ref{CT0}), and (\ref{ftm}), we substitute into Eq.~\eqref{75} to obtain
    \begin{align}
        R^{Ap}_0 - \|\mathcal{C}^{Ap}_0 - T_0 \| > \|T_0 T_{tf}\|_{\min}.
    \end{align}
  Geometrically, this implies that the minimum distance from $T_0$ to any point on the boundary of the Apollonius circle $\mathcal{A}_0$ exceeds the minimum distance from $T_0$ to the boundary of the sensable region $\mathcal{S}_0$. In other words, the Apollonius circle extends beyond the sensable region at $t = 0$, as illustrated in Fig.~\ref{fig:escape}. This violates the necessary and sufficient condition for guaranteed capture established in Lemma~\ref{thmA0S0}. Consequently, there exists at least one escape trajectory for the target, {as described in the proof of Lemma \ref{thmA0S0}}, under these initial conditions.
\end{proof}

\medskip 
The bounds on the target speed from Theorems \ref{thmVT} and \ref{thmesc} apply to \emph{any} initial values of the LOS angles among the agents. {If we restrict our attention to given initial values of the LOS angles between the agents, then we can further analyze the critical target speed. In particular, our next result Theorem \ref{thm:vTcrit}} proves the existence of a critical target speed that defines the boundary between successful capture or evasion.

\medskip

\medskip
\begin{remark}[Sensitivity to Localization Errors]
    The speed thresholds $\overline{v}^T$ and $\underline{v}^T$ depend on the initial distances $d_0^{ST}$ and $d_0^{AT}$. A first-order sensitivity analysis shows that small overestimation errors in these measurements reduce both thresholds, yielding closed-form but conservative estimates that are beneficial for robust real-time implementation. 
\end{remark}

\begin{theorem}[Existence of Critical Target Speed]\label{thm:vTcrit}
    Given the initial conditions, if the sensor and attacker follow the strategies specified in Eqs.~(\ref{eq:gammastar}) and (\ref{gammaA}), respectively, and if the sensor speed $v^S$, target speed $v^T$, and attacker speed $v^A$ satisfy Assumption~\ref{as:speeds}, then there exists a critical speed $v^{T\star}$ within the interval
    \begin{align}
        \underline{v}^T \leq v^{T\star} \leq \overline{v}^T,
    \end{align}
    where $\underline{v}^T$ and $\overline{v}^T$ are given in Eqs. (\ref{vTmax}) and (\ref{vTmin}), respectively,  such that
\begin{enumerate}
    \item If $v^T > v^{T\star}$, the target successfully escapes ($T$ wins).
    \item If $v^T \leq v^{T\star}$, capture is guaranteed (the team consisting of $A$ and $S$ wins).
\end{enumerate}

\end{theorem}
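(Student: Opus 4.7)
My plan is to exploit Lemma~\ref{thmA0S0} and treat the containment $\mathcal{A}_0 \subset \mathcal{S}_0$ as a function of the target speed $v^T$, with all other initial data held fixed. Specifically, I would view $\mathcal{A}_0(v^T)$ and $\mathcal{S}_0(v^T)$ as set-valued maps of $v^T \in (v^S, v^A)$ and show that the containment is monotonically nonincreasing in $v^T$. The critical speed would then be defined as the supremum of those $v^T$ for which the containment still holds, and Theorems~\ref{thmVT} and~\ref{thmesc} would automatically sandwich this supremum between $\underline{v}^T$ and $\overline{v}^T$.

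The first step is monotone expansion of the Apollonius disk. With $\mu = v^T/v^A$, the radius $R^{Ap}_0 = \mu \, d^{AT}_0 / (1 - \mu^2)$ from Eq.~\eqref{Rap} is strictly increasing on $\mu \in (0,1)$, and the center $\mathcal{C}^{Ap}_0$ in Eq.~\eqref{cAP} drifts monotonically along the ray from $A_0$ through $T_0$; this is a direct calculation. The second step, monotone contraction of $\mathcal{S}_0$, is more delicate. I would implicitly differentiate the quadratic~\eqref{quad} with respect to $\nu = v^S/v^T$ at a fixed heading $\gamma^T$ and show that the physically relevant root $\|T_0 T_{t_f}\|$ from Eq.~\eqref{ttf} is strictly increasing in $\nu$. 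Since $\nu$ decreases as $v^T$ grows, the boundary Cartesian oval of $\mathcal{S}_0$ would then contract direction-by-direction, giving $\mathcal{S}_0(v^T_2) \subseteq \mathcal{S}_0(v^T_1)$ whenever $v^T_2 \geq v^T_1$.

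With both monotonicities in hand, I would define
\[
v^{T\star} := \sup\bigl\{ v^T \in (v^S, v^A) \,:\, \mathcal{A}_0(v^T) \subset \mathcal{S}_0(v^T) \bigr\}.
\]
Theorem~\ref{thmVT}, via its geometric conclusion, places every $v^T < \underline{v}^T$ in the above set, so $v^{T\star} \geq \underline{v}^T$. Theorem~\ref{thmesc} shows that no $v^T > \overline{v}^T$ lies in the set, so $v^{T\star} \leq \overline{v}^T$. Claim~(1) of the theorem follows from the definition of the supremum together with Lemma~\ref{thmA0S0}, and claim~(2) follows by combining monotonicity with a continuity argument: the set-valued maps $\mathcal{A}_0(\cdot)$ and $\mathcal{S}_0(\cdot)$ vary continuously in the Hausdorff sense, so the containment, being a closed condition, persists at $v^T = v^{T\star}$ itself.

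The hard part will be rigorously proving the contraction of $\mathcal{S}_0$ with respect to $v^T$, because two competing effects are at play: a faster target covers more distance per unit time, while the sensing window simultaneously shortens. My approach would be to write Eq.~\eqref{quad} as an implicit equation $F(\|T_0 T_{t_f}\|, \nu; \gamma^T) = 0$ and to verify the signs of the partial derivatives on the correct branch of the quadratic; provided those signs are as expected, the implicit function theorem yields monotonicity direction-by-direction. After that, the remainder of the argument reduces to a routine supremum/continuity exercise.
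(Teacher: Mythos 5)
Your proposal follows essentially the same route as the paper: both arguments rest on the continuity and monotonicity of the two regions in $v^T$ --- $\mathcal{A}_0(v^T)$ expanding and $\mathcal{S}_0(v^T)$ contracting --- and then sandwich the critical speed between $\underline{v}^T$ and $\overline{v}^T$ using Theorems~\ref{thmVT} and~\ref{thmesc}. If anything, you are more careful than the paper, which simply asserts the contraction of $\mathcal{S}_0(v^T)$ and the nesting relations without the explicit supremum construction or the implicit-differentiation argument you flag as the remaining work.
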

{\begin{remark}[Multiple Tangency Points Scenario]
 Due to the non-convexity of the sensable region, the Apollonius circle and sensable region may be tangent at multiple points, each potentially corresponding to a different critical speed $v^{T\star}$. However, for any such critical speed $v^{T\star}$, increasing the target speed beyond this value ensures that the two regions intersect. The smallest $v^T$ at which the two curves are tangent to each other is then considered as the critical speed $v^{T\star}$. Any $v^T > v^{T\star}$ results in the intersection of the two curves leading to the escape direction for the target. Therefore, although multiple tangency points may exist, the condition $v^T > v^{T\star}$ still guarantees the target’s victory.
\end{remark}}
\begin{proof}[Proof of Theorem \ref{thm:vTcrit}]
From the definitions of the sensable region $\mathcal{S}_0$ and the Apollonius circle $\mathcal{A}_0$, given in Definition~\ref{def:sensable} and Definition~\ref{def:ap}, respectively, it is evident that both $\mathcal{S}_0(v^T)$ and $\mathcal{A}_0(v^T)$ vary continuously with $v^T$. Furthermore, since $\mathcal{S}_0(v^T)$ characterizes the distance the target $T$ must travel to escape sensing, for fixed values of $v^A$ and $v^S$, an increase in $v^T$ results in a contraction of $\mathcal{S}_0(v^T)$. Conversely, because $\mathcal{A}_0(v^T)$ depends on the relative separation between the attacker $A$ and the target $T$, an increase in $v^T$ leads to an expansion of $\mathcal{A}_0(v^T)$.
\begin{figure}[!h]
    \centering
    \includegraphics[width=\linewidth]{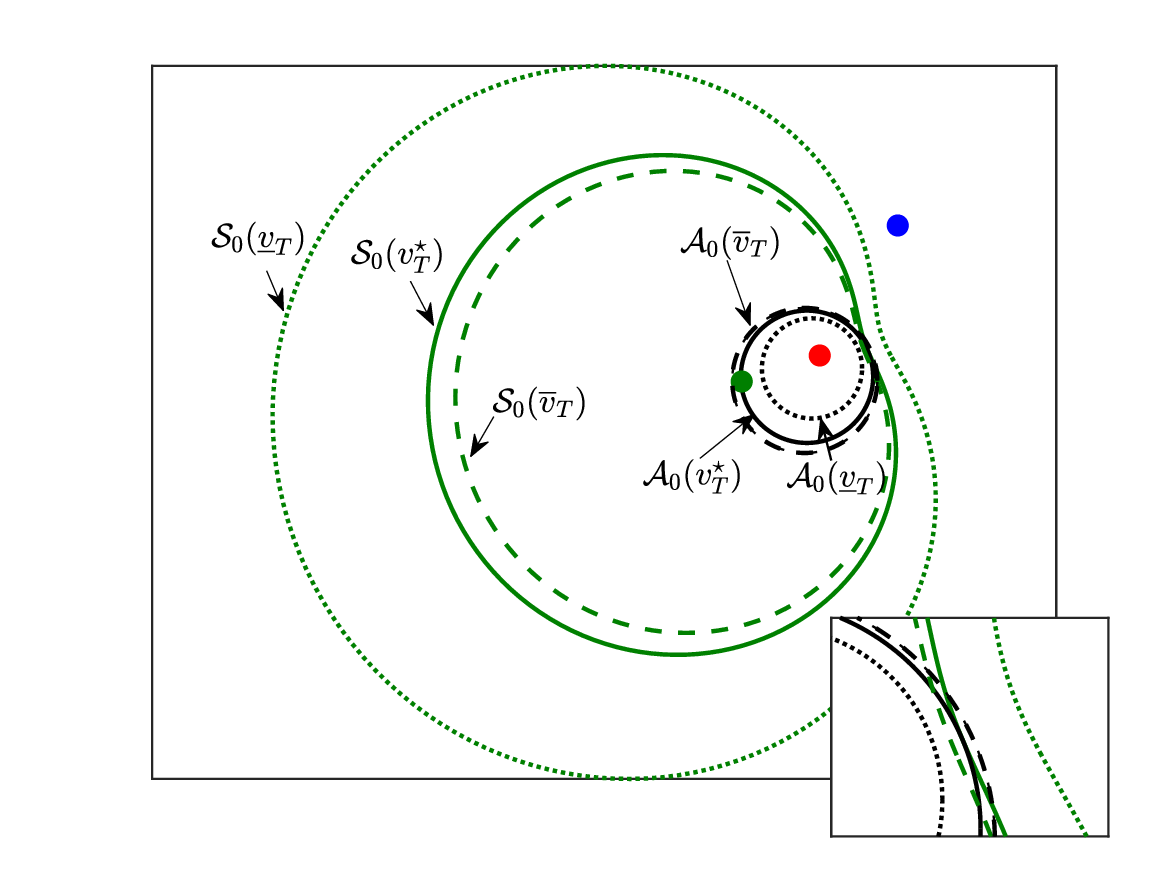}
    \caption{Visualization of $\mathcal{S}_0$ and $\mathcal{A}_0$ for target velocity values $\underline{v}^T$ (dotted line), $v^{T\star}$ (solid line), and $\overline{v}^T$ (dashed line).}
    \label{fig:vTcexist}
\end{figure}
From Theorem~\ref{thmVT} and Theorem~\ref{thmesc}, we have established that for $v^T \leq \underline{v}^T$, the set $\mathcal{A}_0(v^T)$ remains fully contained within $\mathcal{S}_0(v^T)$, guaranteeing capture. However, for $v^T \geq \overline{v}^T$, $\mathcal{A}_0(v^T)$ intersects with $\mathcal{S}_0(v^T)$, allowing the target to escape.  Therefore, in the interval $(\underline{v}^T,\,\overline{v}^T)$, there must exist a critical target speed $v^{T\star}$ at which $\mathcal{A}_0(v^T)$ is tangent to $\mathcal{S}_0(v^T)$. Specifically, the following relations hold
\begin{align*}
    \mathcal{S}_0(\overline{v}^{T}) \subset  \mathcal{S}_0(v^{T\star}) \subset  \mathcal{S}_0(\underline{v}^T), \\
    \mathcal{A}_0(\underline{v}^T) \subset  \mathcal{A}_0(v^{T\star}) \subset  \mathcal{A}_0(\overline{v}^T).
\end{align*}
This is illustrated in Fig.~\ref{fig:vTcexist}, which depicts the evolution of $\mathcal{S}_0(v^T)$ and $\mathcal{A}_0(v^T)$ as $v^T$ increases, thereby confirming the existence of the critical speed $v^{T\star}$.
\end{proof}
\medskip 
It is important to note that the point at which $\mathcal{A}_0$ and $\mathcal{S}_0$ are tangent to each other is not necessarily the point corresponding to the minimum distance required by the target to escape, as given in Eq. (\ref{ttfmin}). However, under the specified initial conditions, by assuming that the tangency point coincides with the point of minimum escape distance, we can refine the upper bound $\overline{v}^T$ from Theorem~\ref{thmesc} to more precisely characterize the speed regime that ensures the target's escape. We formalize this in the following result.

\begin{theorem}[Escape Condition - Given Initial Geometry]\label{thm:vTstar_tangent}
   Suppose that the sensor and the attacker follow the strategies in Eqs.~(\ref{eq:gammastar}) and (\ref{gammaA}), respectively, and the sensor speed $v^S$, target speed $v^T$, and attacker speed $v^A$ satisfy Assumption~\ref{as:speeds}. Then there exists a speed $\overline{v}^{T\star} < \overline{v}^T$, where the value of $\overline{v}^T$ is given in Eq. (\ref{vTmin}), such that 
     if
     $$v^T > \overline{v}^{T\star},$$ 
     then there exists an escape strategy for the target ($T$ wins). Further, the value of $\overline{v}^{T\star}$ satisfies the quadratic equation
\begin{align} \label{vTcrit}
a {\overline{v}^{T\star}}^2 + b \overline{v}^{T\star} + c = 0,~~~~
\end{align}
where
\begin{align*}
   a &=  1 - 2 \left(\dfrac{R - d^{ST}_0}{d^{AT}_0}\right) \cos (\theta^{AT}_0 - \theta^{ST}_0) + \left(\dfrac{R - d^{ST}_0}{d^{AT}_0}\right)^2, \\
b &= - 2 v^S \left[( 1 - \left(\dfrac{R - d^{ST}_0}{d^{AT}_0}\right) \cos  (\theta^{AT}_0 - \theta^{ST}_0)\right], \\ 
c &= {v^{S}}^2 -\left(\dfrac{R - d^{ST}_0}{d^{AT}_0}\right)^2 {v^{A}}^2,
\end{align*}
with $d_0^{ST}$ and $d_0^{AT}$ denoting the initial separations between the sensor and the target, and between the attacker and the target, respectively, $\theta_0^{AT}$ denoting the initial line-of-sight angle between the attacker and the target, $\theta_0^{ST}$ denoting the initial line-of-sight angle between the sensor and the target, and $R$ being the sensing radius of the sensor.
\end{theorem}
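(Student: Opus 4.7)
My plan is to characterize $\overline{v}^{T\star}$ by imposing the tangency assumption stated just before the theorem: the Apollonius circle $\mathcal{A}_0$ is tangent to the boundary of the sensable region $\mathcal{S}_0$ precisely at the minimum-escape point of the target. By Corollary~\ref{cor1}, this candidate escape point is $P := T_0 + \ell_{\min}\,(\cos\theta^{ST}_0, \sin\theta^{ST}_0)$ with $\ell_{\min} = (R - d^{ST}_0)/(1-\nu)$, and it lies on the boundary of $\mathcal{S}_0$ by construction. For it to also lie on the boundary of $\mathcal{A}_0$, Definition~\ref{def:ap} gives $\|T_0 P\|/\|A_0 P\| = v^T/v^A$, equivalently
\begin{align*}
\|A_0 P\|^2 \, (v^T)^2 \;=\; \ell_{\min}^2 \, (v^A)^2.
\end{align*}

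The next step is to apply the law of cosines in $\triangle A_0 T_0 P$ to express $\|A_0 P\|^2$ in terms of $d^{AT}_0$, $\ell_{\min}$, and the angle at $T_0$, whose cosine reduces to a signed multiple of $\cos(\theta^{AT}_0-\theta^{ST}_0)$ determined by the LOS conventions in Section~\ref{SecII}. Substituting $\nu = v^S/v^T$ and hence $\ell_{\min} = (R - d^{ST}_0)\, v^T/(v^T - v^S)$ into the tangency equation, multiplying through by $(v^T - v^S)^2/(d^{AT}_0)^2$, and collecting powers of $v^T$ produces a quadratic in $v^T$ whose coefficients, after recognizing $\alpha := (R - d^{ST}_0)/d^{AT}_0$, match $a,b,c$ in Eq.~(\ref{vTcrit}). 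The positive, physically admissible root is defined to be $\overline{v}^{T\star}$.

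To conclude that $v^T > \overline{v}^{T\star}$ implies escape, I would argue by a monotonicity-in-$v^T$ argument analogous to the one used in the proof of Theorem~\ref{thm:vTcrit}. As $v^T$ increases past $\overline{v}^{T\star}$: (i) $\ell_{\min} = (R - d^{ST}_0)\, v^T/(v^T - v^S)$ is strictly decreasing, so the sensable region contracts along the ray through $S_0$ and $T_0$; and (ii) $\mu = v^T/v^A$ strictly increases, so $\mathcal{A}_0$ expands strictly. Both effects push $P$ strictly into the interior of $\mathcal{A}_0$, which---by the construction in the proof of Lemma~\ref{thmA0S0}---supplies an admissible open-loop escape heading (namely $\gamma^T = \theta^{ST}_0$). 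The strict inequality $\overline{v}^{T\star} < \overline{v}^T$ follows by noting that $\overline{v}^T$ in Theorem~\ref{thmesc} arises from a worst-case specialization (corresponding to $A_0,T_0,P$ being collinear with $A_0$ opposite the escape ray), so for a generic initial LOS geometry the tangency threshold is strictly smaller.

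The main obstacle I anticipate is the algebraic reduction of the Apollonius tangency condition to the stated quadratic form in Eq.~(\ref{vTcrit}) and the careful identification of the admissible root (the discarded root will either exceed $v^A$---violating Assumption~\ref{as:speeds}---or be negative). A secondary subtlety is justifying that ``tangency at the min-escape point'' is a meaningful sufficient condition: since the true tangency point between $\mathcal{A}_0$ and $\mathcal{S}_0$ need not lie on the direction $\theta^{ST}_0$, forcing this coincidence yields a sharper threshold than Theorem~\ref{thmesc} but not necessarily the exact critical speed $v^{T\star}$ guaranteed by Theorem~\ref{thm:vTcrit}; nevertheless, it is valid as a sufficient escape criterion because any $v^T$ making the minimum-distance escape point interior to $\mathcal{A}_0$ exhibits an explicit escape strategy.
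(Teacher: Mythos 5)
Your proposal follows essentially the same route as the paper: impose tangency of $\mathcal{A}_0$ and $\mathcal{S}_0$ at the minimum-escape point $\gamma^T=\theta^{ST}_0$ from Corollary~\ref{cor1}, convert that to an algebraic condition via the law of cosines, substitute $\ell_{\min}=(R-d^{ST}_0)v^T/(v^T-v^S)$, and collect terms into the quadratic of Eq.~(\ref{vTcrit}); your only deviations are cosmetic or additive --- you apply the law of cosines in $\triangle A_0 T_0 P$ with the Apollonius ratio rather than in $\triangle \mathcal{C}^{Ap}_0 T_0 T_{t_f}$ with the circle's center and radius (algebraically equivalent), and you explicitly supply the monotonicity argument ($\mathcal{S}_0$ contracts and $\mathcal{A}_0$ expands in $v^T$) that makes $v^T>\overline{v}^{T\star}$ sufficient for escape and yields $\overline{v}^{T\star}<\overline{v}^T$, steps the paper leaves implicit. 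Your caution about the sign of the cosine term under the LOS conventions is warranted and should be pinned down before declaring the coefficients identical to $a,b,c$, but the structure of the derivation is the paper's.
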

\begin{proof}[Proof of Theorem \ref{thm:vTstar_tangent}]
Consider the initial engagement geometry as depicted in Fig.~\ref{fig:vTcrit}, such that the point of minimum escape distance for the target -- denoted by $T_{t_f}$ lies on $\mathcal{A}_0$.
Let us now assume $v^T = \overline{v}^{T\star}$.
\begin{figure}[!h]
    \centering
    \includegraphics[width=\linewidth]{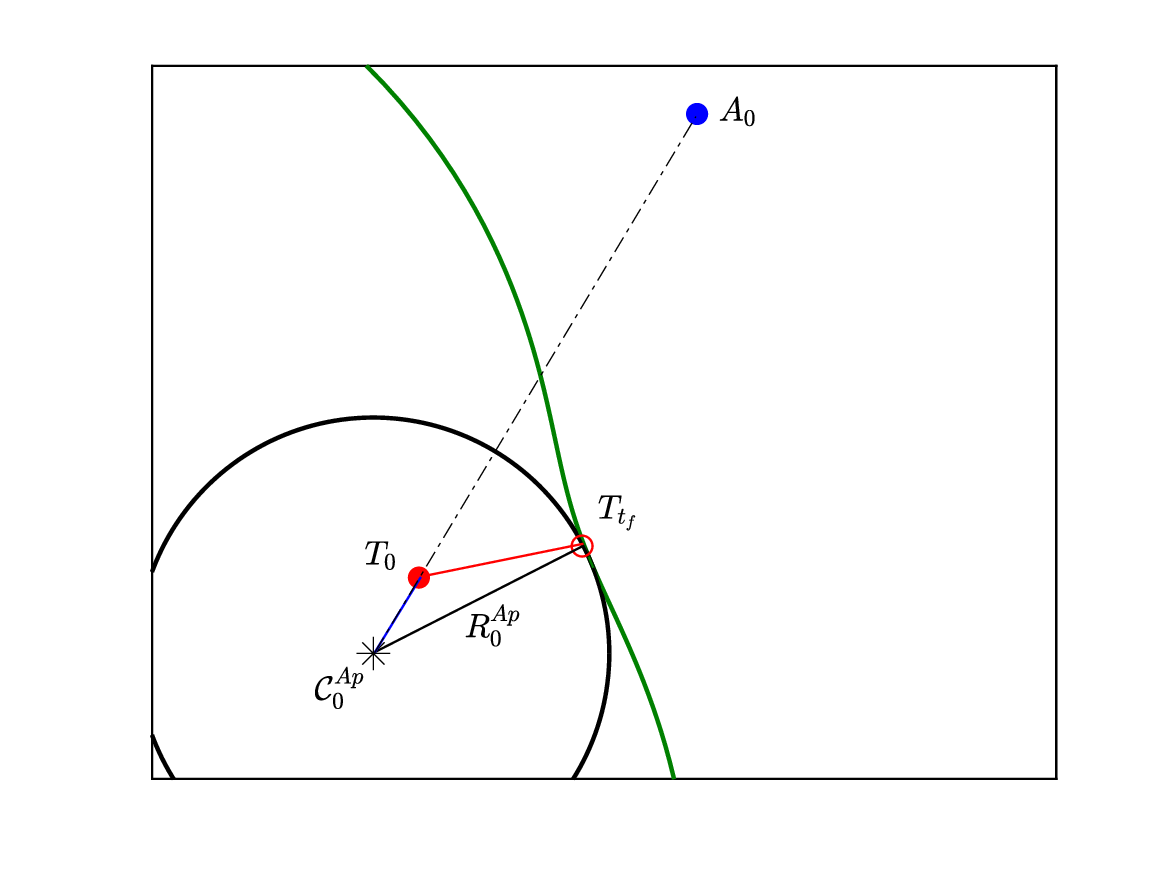}
    \caption{Critical configuration where $\mathcal{A}_0$ and $\mathcal{S}_0$ are tangent.}
    \label{fig:vTcrit}
\end{figure}
Applying the Law of Cosines to triangle $\triangle \mathcal{C}^{Ap}_0 T_0 T_{t_f}$ yields
\begin{align} \nonumber
    {R^{Ap}_0}^2 &= \|\mathcal{C}^{Ap}_0 - T_0\|^2 + \|T_0 T_{tf}\|_{\min}^2 \\ \label{csn}
    &~- 2 \|\mathcal{C}^{Ap}_0 - T_0\| \|T_0 T_{tf}\|_{\min} \cos [\pi - (\theta^{AT}_0 -\gamma^T )].
\end{align}
The minimum escape distance for the target is achieved at $\gamma^T = \theta^{ST}_0$ as proved in Corollary \ref{cor1}.
Using Eqs.~(\ref{Rap}) and~(\ref{CT0}), Eq.~(\ref{csn}) simplifies to
\begin{align}\nonumber
   \|T_0 T_{tf}\|_{\min}^2 + 2 \left(\dfrac{\mu^2}{1-\mu^2}\right) &{d^{AT}_0} \|T_0 T_{tf}\|_{\min} \cos (\theta^{AT}_0 -\theta^{ST}_0 ) \\& \quad - \left( \dfrac{\mu^2}{1-\mu^2}\right) {d^{AT}_0}^2 = 0.
\end{align}
The solution for $\|T_0 T_{t_f}\|_{\min}$ is
\begin{align}\nonumber
   \|T_0 T_{t_f}\|_{\min} &= \dfrac{\mu}{1-\mu^2}d^{AT}_0 \left(-\mu \cos (\theta^{AT}_0 -\theta^{ST}_0 ) \right. \\ \label{rt} & \qquad  \qquad \left. \pm \sqrt{1 - \mu^2 \sin^2 (\theta^{AT}_0 -\theta^{ST}_0 )}\right) 
\end{align}
Substituting $\|T_0 T_{t_f}\|_{\min}$ from Eq.~(\ref{ttfmin}) into the above expression leads to
\begin{align} \nonumber
    \dfrac{R - d^{ST}_0}{d^{AT}_0} \left(\dfrac{1-\mu^2}{\mu (1 - \nu)}\right)  = &- \mu \cos (\theta^{AT}_0 -\theta^{ST}_0 ) \\ \label{sim}
    &\pm \sqrt{1 - \mu^2 \sin ^2 (\theta^{AT}_0 -\theta^{ST}_0)}. 
\end{align}
For the convenience of notations, let us denote the constants in the equation as  $D = \dfrac{R - d^{ST}_0}{d^{AT}_0}$, $P_\theta = \cos (\theta^{AT}_0 -\theta^{ST}_0 )$ and $Q_\theta = \sin (\theta^{AT}_0 -\theta^{ST}_0 )$. Using these notations, Eq. (\ref{sim}) can be represented as
\begin{align}
    D \left(\dfrac{1-\mu^2}{\mu (1 - \nu)}\right) + \mu P_\theta = \pm \sqrt{1 - \mu^2 Q_\theta^2}.
\end{align}
Substituting $\mu = \overline{v}^{T\star}/v^A$ and $\nu = v^S/\overline{v}^{T\star}$, we get
\begin{align}
    D \left(\dfrac{{v^{A}}^2 - {\overline{v}^{T\star}}^2}{v^A(\overline{v}^{T\star} - v^S)}\right) + \dfrac{\overline{v}^{T\star}}{v^A}P_\theta = \pm \dfrac{1}{v^A} \sqrt{{v^{A}}^2 - {\overline{v}^{T\star}}^2 Q_\theta^2},
\end{align}
which simplifies to
\begin{align}
    D \left(\dfrac{{v^{A}}^2 - {\overline{v}^{T\star}}^2}{\overline{v}^{T\star} - v^S}\right) + \overline{v}^{T\star} P_\theta = \pm \sqrt{{v^{A}}^2 - {\overline{v}^{T\star}}^2 Q_\theta^2}.
\end{align}
Squaring both sides yields
\begin{align} \nonumber
    D^2 \left(\dfrac{{v^{A}}^2 - {\overline{v}^{T\star}}^2}{  \overline{v}^{T\star} - v^S}\right)^2 &+ {{\overline{v}^{T\star}}^2}P_\theta^2  \\
    & \hspace{-1cm}+ 2 D \left(\dfrac{{v^{A}}^2 - {\overline{v}^{T\star}}^2}{  \overline{v}^{T\star} - v^S}\right) \overline{v}^{T\star} P_\theta = {v^{A}}^2 - {\overline{v}^{T\star}}^2 Q_\theta^2,\\ \nonumber
    D^2 \left(\dfrac{{v^{A}}^2 - {\overline{v}^{T\star}}^2}{  \overline{v}^{T\star} - v^S}\right)^2 & + 2 D \left(\dfrac{{v^{A}}^2 - {\overline{v}^{T\star}}^2}{  \overline{v}^{T\star} - v^S}\right) \overline{v}^{T\star} P_\theta \\ & \qquad \qquad \qquad + {{\overline{v}^{T\star}}^2} - {v^{A}}^2 = 0.
\end{align}
The above equation can be further simplified as 
\begin{align}
     D^2 ({{v^{A}}^2 - {\overline{v}^{T\star}}^2}) & + 2 DP_\theta{(\overline{v}^{T\star} - v^S)}\overline{v}^{T\star}   - (\overline{v}^{T\star} - v^S)^2 = 0.
\end{align}
Expanding the square terms and simplifying results in a quadratic in $\overline{v}^{T\star}$, such as
\begin{align}
\nonumber
    &\left[ 1 - 2 \left(\dfrac{R - d^{ST}_0}{d^{AT}_0}\right) \cos  (\theta^{AT}_0 - \theta^{ST}_0) + \left(\dfrac{R - d^{ST}_0}{d^{AT}_0}\right)^2 \right]{\overline{v}^{T\star}}^2 \\ \nonumber
 & \quad \qquad -  2 v^S \left[ 1 - \left(\dfrac{R - d^{ST}_0}{d^{AT}_0}\right) \cos  (\theta^{AT}_0 - \theta^{ST}_0)\right] \overline{v}^{T\star} \\
 &\quad  \qquad \quad + {v^{S}}^2 -\left(\dfrac{R - d^{ST}_0}{d^{AT}_0}\right)^2 {v^{A}}^2 = 0.
\end{align}
Therefore, by substituting
\begin{align}\label{coeff}
    a &=  1 - 2 \left(\dfrac{R - d^{ST}_0}{d^{AT}_0}\right) \cos (\theta^{AT}_0 - \theta^{ST}_0) + \left(\dfrac{R - d^{ST}_0}{d^{AT}_0}\right)^2 \\
b &= - 2 v^S \left[ 1 - \left(\dfrac{R - d^{ST}_0}{d^{AT}_0}\right) \cos  (\theta^{AT}_0 - \theta^{ST}_0)\right],\\ 
c &= {v^{S}}^2 -\left(\dfrac{R - d^{ST}_0}{d^{AT}_0}\right)^2 {v^{A}}^2,
\end{align}
the quadratic given in  Eq. (\ref{vTcrit}) is obtained, completing the proof.
\end{proof}
\medskip 
\begin{remark}[Comparison of Theorems \ref{thmesc} and \ref{thm:vTstar_tangent}]
Theorem~\ref{thm:vTstar_tangent} only provides a tighter sufficient condition for the target's escape instead of a necessary and sufficient one. We compare this bound numerically to the one from Theorem~\ref{thmesc}. 
It is also important to note that the result in Theorem~\ref{thmesc} is valid for \emph{any} arbitrary initial orientations between the agents, while the tighter bound presented in Theorem~\ref{thm:vTstar_tangent} is derived specifically for the given set of initial conditions.
\end{remark}

\begin{remark}[Challenges in Defining  Capture Condition]
Theorem \ref{thm:vTstar_tangent} identifies a sufficient condition for escape based on the assumption that the point of tangency between $\mathcal{A}_0$ and $\mathcal{S}_0$ corresponds to the minimum distance the target must travel to exit $\mathcal{S}_0$. However, our numerical results show that tangency can occur at other points, potentially closer than the assumed escape distance.
Analytically determining these points is challenging, as it involves solving two biquadratic equations corresponding to the Apollonius circle and the sensor's sensable region. As a result, when $v^T < \overline{v}^{T\star}$, we cannot definitively conclude capture, since the true critical speed $v^{T\star}$ may be lower. In contrast, if tangency occurs at or beyond the minimum escape distance, intersection is guaranteed, Theorem 2 is violated, and escape is assured. This kind of certainty does not hold for capture, which is why Theorem 5 does not attempt to characterize it.
\end{remark}

\section{Numerical Evaluation and Visualization} \label{SecV}
To visualize the theoretical findings from Section~\ref{SecIII} and Section~\ref{SecIV}, we now present numerical illustrations that demonstrate the behavior of the agents under different initial conditions.

\subsection{Visualization of Theorem \ref{thm1} and Corollary \ref{cor1}}
To visualize Theorem~\ref{thm1} and Corollary~\ref{cor1}, we examine the influence of the target’s heading angle $\gamma^T$ on the engagement’s termination time $t_f$. The initial positions of the agents are defined as $S_0 = (0,\,0)$, $A_0 = (1,\,3)$, and $T_0 = (1.5,\, 1.5)$. The agents move with constant speeds: $v^S = 0.125\,$m/s, $v^T = 0.35\,$m/s, and $v^A = 1\,$m/s. The sensor has a fixed sensing radius of $R = 2\,$m. At the initial time $t = 0$, the relative distances and line-of-sight angles between the agents are given by: $d^{AT}_0 = 1.7678\,$m, $\theta^{AT}_0 = -81.86^\circ$, and $d^{ST}_0 = 1.7678\,$m, with $\theta^{ST}_0 = 45^\circ$.

Figure~\ref{fig:gt_variation} illustrates the trajectories of all three agents for varying values of $\gamma^T$. The plots also display the sensable region $\mathcal{S}_t$ and the Apollonius circle $\mathcal{A}_t$ at the start ($t = 0$, dashed lines) and at the termination time ($t = t_f$, solid lines), highlighting how the geometry of the engagement evolves over time in response to different target strategies.
 \begin{table}[!h]
    \centering
    \begin{tabular}{|c|c|c|}
        \hline
        Target Heading $\left(\gamma^T\right)$ & $\left(\|T_0 T_{t_f}\|\right)$ (m) &  $t_f$ (s) \\
        \hline
        $-81.86^\circ$ & $4.3020$ & $12.2914$ \\
        $45^\circ$ & $0.3613$ & $1.0321$ \\
        $60^\circ$ & $0.3778$ & $1.0794$ \\
        \hline
    \end{tabular}
    \caption{Escape distance and termination time for different values of $\gamma^T$.}
    \label{tab:thm1}
\end{table}
\begin{figure*}[!h]
    \centering
    \begin{subfigure}[t]{0.325\textwidth}
        \centering
\includegraphics[width=\linewidth]{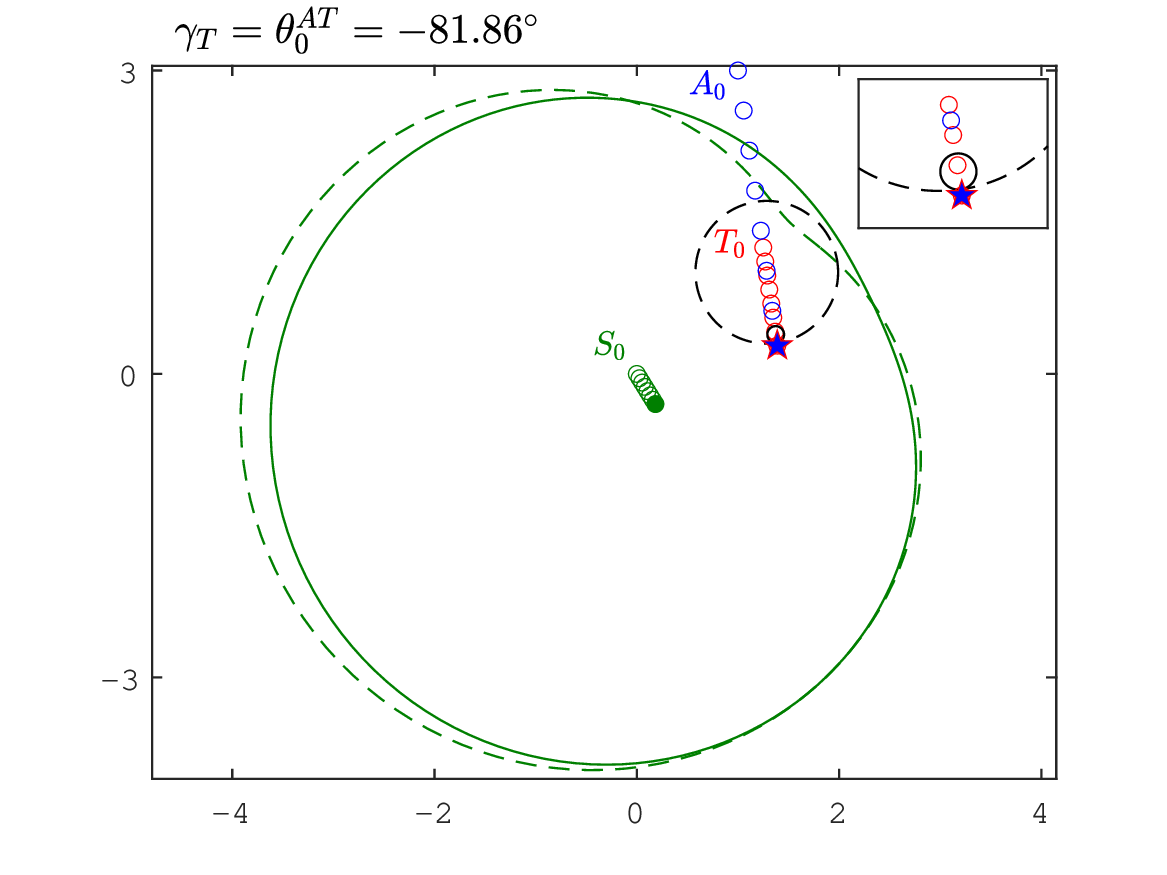}
        \caption{$\gamma^T = {\theta^{AT}_0} = -81.86^\circ$}
        \label{fig:gtAt}
    \end{subfigure}
    \begin{subfigure}[t]{0.325\textwidth}
        \centering
\includegraphics[width=\linewidth]{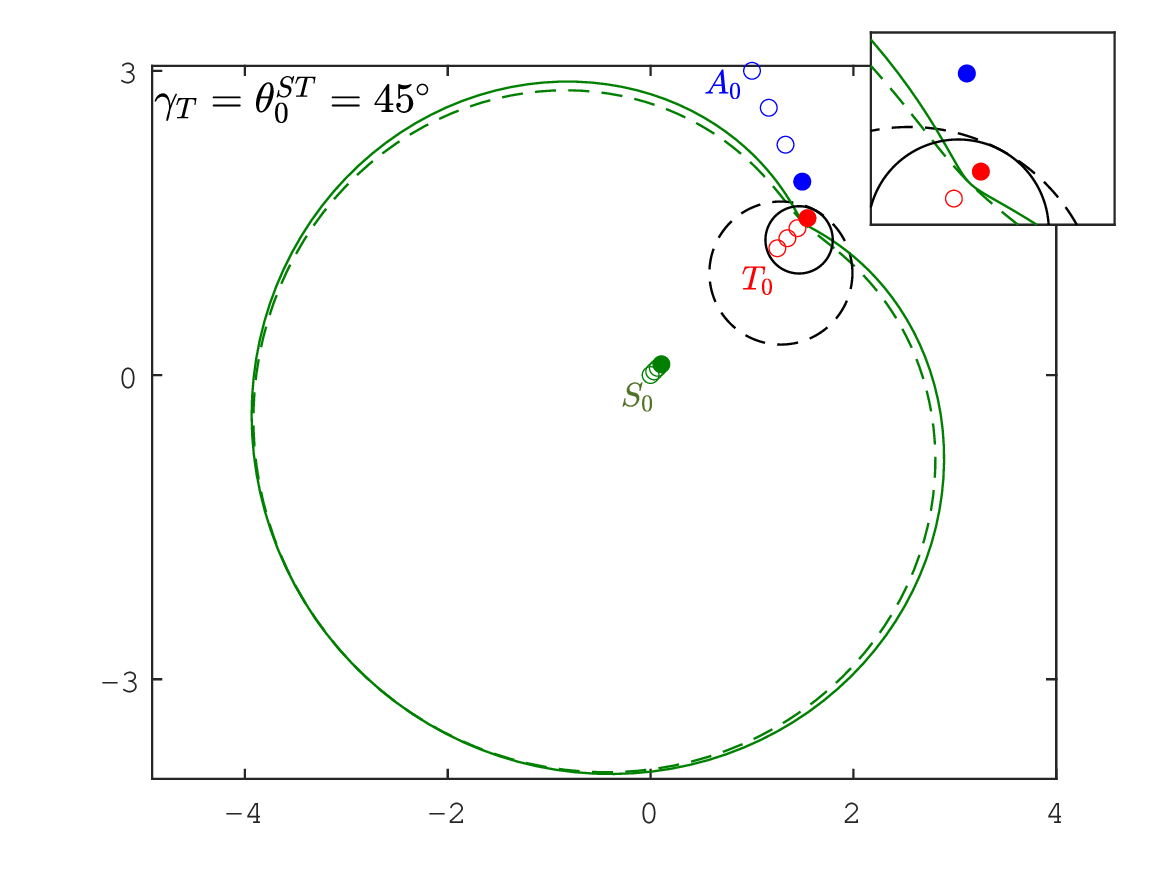}
        \caption{$\gamma^T ={\theta^{ST}_0} = 45^\circ$}
         \label{fig:gtSt}
    \end{subfigure}
    \begin{subfigure}[t]{0.325\textwidth}
        \centering
\includegraphics[width=\linewidth]{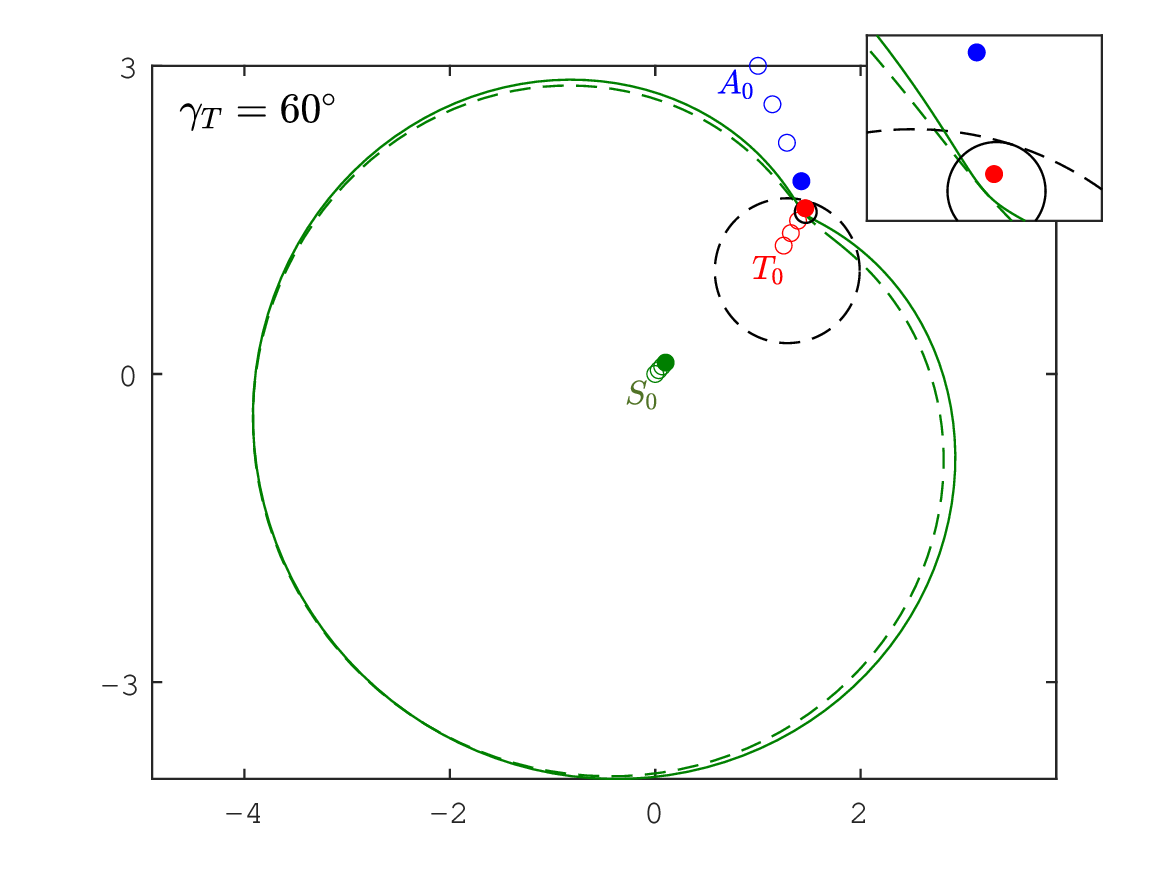}
        \caption{$\gamma^T = 60^\circ$}
         \label{fig:gt60}
    \end{subfigure}s
    \caption{Evolution of agent trajectories for various values of the target's heading angle $\gamma^T$. Each subplot illustrates the motion of the attacker ($A$), sensor ($S$), and target ($T$) over time, along with the set $\mathcal{A}_t$ the set $\mathcal{S}_t$. The sets $\mathcal{A}_t$ and $\mathcal{S}_t$ are shown at the initial time $t = 0$ (dashed line) and at the termination time $t = t_f$ (solid line).}
\label{fig:gt_variation}
\end{figure*}
Table~\ref{tab:thm1} presents the escape distances $\|T_0 T_{t_f}\|$ and the corresponding termination times $t_f$ for different values of the target's heading angle $\gamma^T$. Based on the trajectories, it is evident that under the given initial conditions, the Apollonius circle $\mathcal{A}_0$ intersects the sensor's sensable region $\mathcal{S}_0$, indicating that the target $T$ has viable escape strategies.

Figure \ref{fig:gtAt} shows a scenario when the target moves directly away from the attacker, that is, $\gamma^T = \theta^{AT}_0$. In this case, $T$ needs to cover a greater distance to escape, and is captured by $A$ due to its higher speed. In contrast, when the target moves directly away from the sensor, that is, $\gamma^T = \theta^{ST}_0 = 45^\circ$, it follows the shortest path and escapes successfully as shown in Fig. \ref{fig:gtSt}. It can also be confirmed from Table \ref{tab:thm1} that the time required to escape is minimum in this case. Additionally, it can be observed from Fig.~\ref{fig:gt60} that the target also manages to escape when $\gamma^T = 60^\circ$, although the time taken in this case is longer than when $\gamma^T = 45^\circ$.

\subsection{Visualization of Theorem \ref{thmVT}}
To visualize Theorem~\ref{thmVT}  we illustrate how the changes in  target speed $v^T$  relative to a critical threshold affects the escape conditions for $T$.

In the first case, the target speed is lower than the threshold given by Eq. (\ref{vTmax}), which leads to $\mathcal{A}_0 \subset \mathcal{S}_0$, guaranteeing that the target will eventually be captured. In the second case, the target moves with a speed greater than the critical value, resulting in the  $\mathcal{A}_0$ partially extending beyond the sensable region $\mathcal{S}_0$, thereby allowing an escape strategy for $T$ to exist.

The initial positions of the agents are: $A_0 = (-2,\, 1)$, $S_0 = ( 0,\,0 )$, and $T_0 = (1,\,0.5 )$. 
The attacker moves with speed $v^A = 1\,$m/s and the sensor is moving with $v^S = 0.125\,$m/s. The sensing radius is assumed to be limited to $R = 2\,$m. Additionally, the initial relative separations and line-of-sight angles  are given by $d^{AT}_0 =3.0414\,$m, $\theta^{AT}_0 = -9.4623^\circ$, and $d^{ST}_0 = 1.1180\,$m, $\theta^{ST}_0 = 26.56^\circ$. With these initial conditions, when Theorem \ref{thmVT} is applied it can be calculated that in order to have $\mathcal{A}_0 \subset \mathcal{S}_0$, the target speed $v^T$ should be less than $0.3217\,$m/s.

\begin{figure*}[!h]
    \centering
    \begin{subfigure}[t]{0.325\textwidth}
        \centering
        \includegraphics[width=\linewidth]{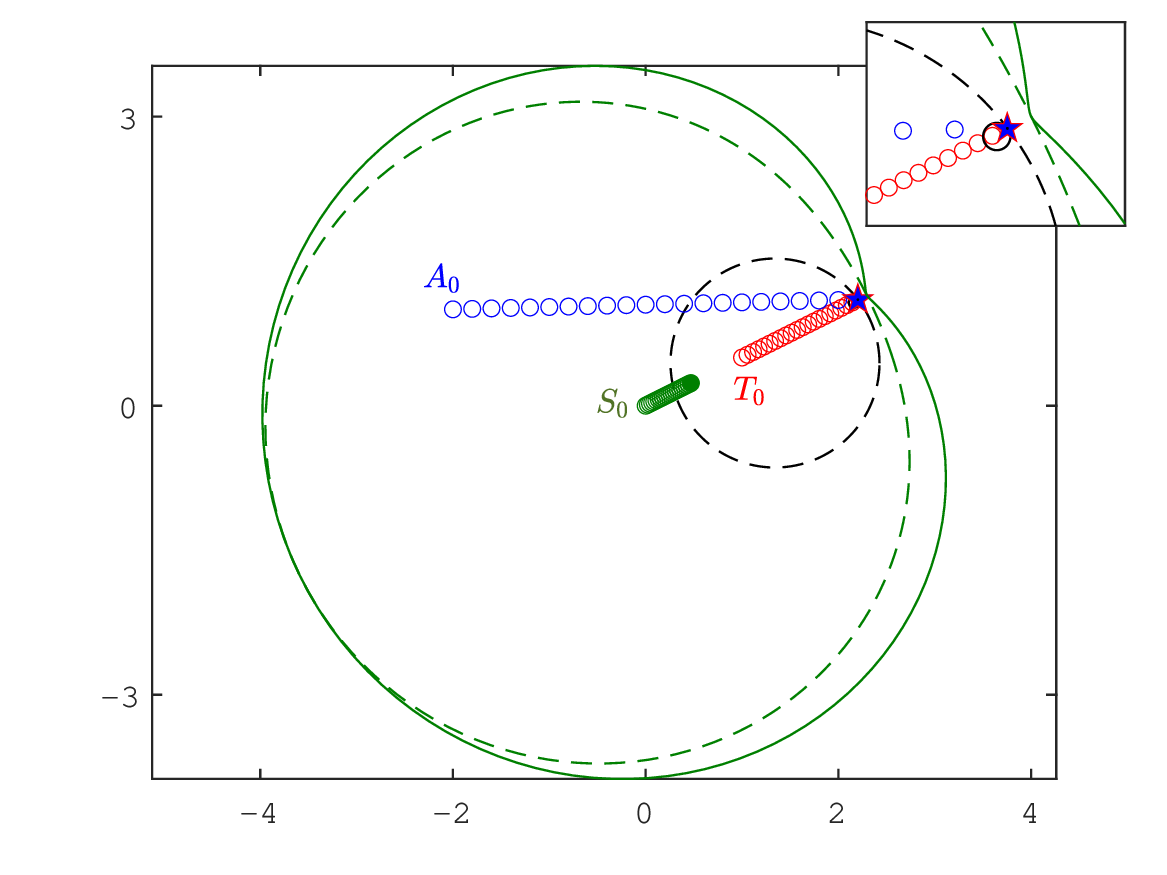}
        \caption{}
        \label{fig:int}
    \end{subfigure}
    \begin{subfigure}[t]{0.325\textwidth}
        \centering
    \includegraphics[width=\linewidth]{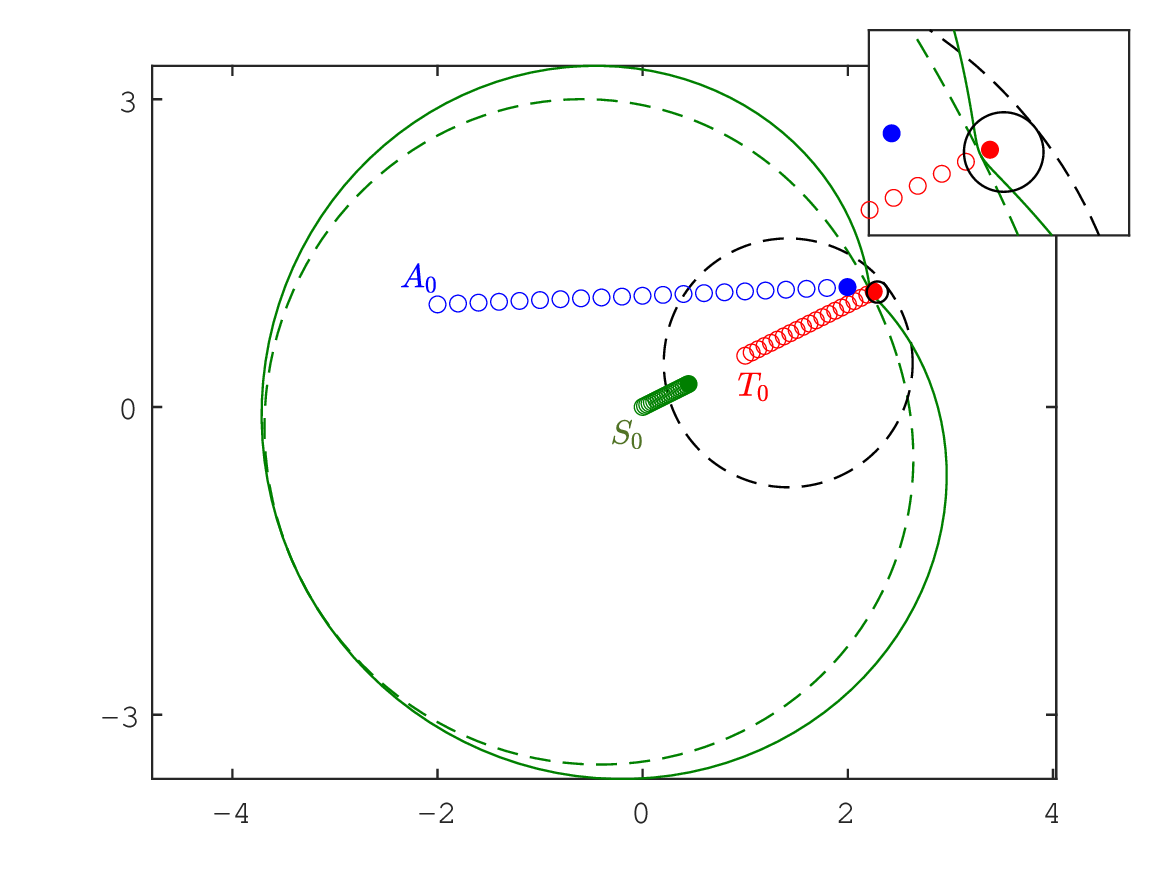}
        \caption{}
        \label{fig:esc}
    \end{subfigure}
    \begin{subfigure}[t]{0.325\textwidth}
        \centering
    \includegraphics[width=\linewidth]{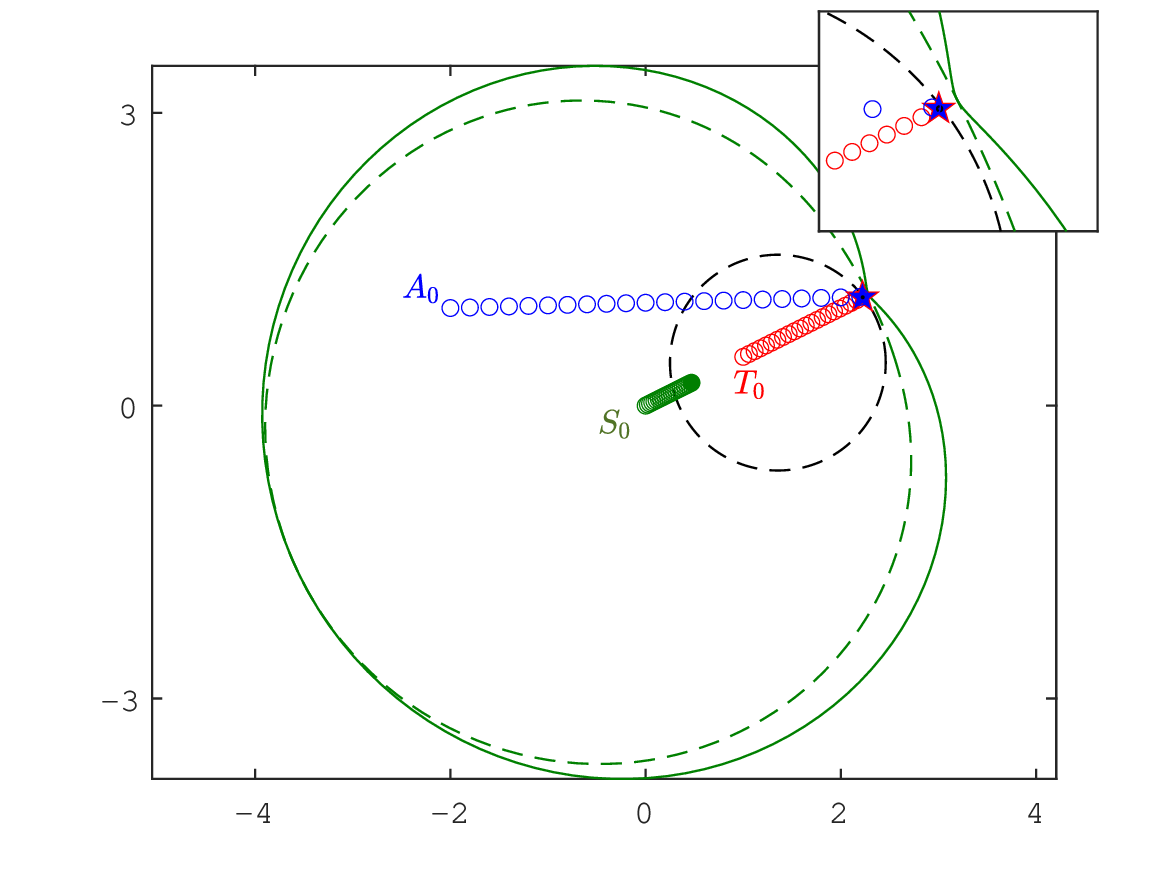}
        \caption{}
        \label{fig:ex}
    \end{subfigure}
    \caption{Illustration of Theorem~\ref{thmVT}. In subplot (a), the target's speed is below the threshold, resulting in $\mathcal{A}_0 \subset \mathcal{S}_0$ and leading to guaranteed capture. In subplot (b), the target's speed exceeds the critical threshold, causing the set $\mathcal{A}$ to extend beyond the sensable region $\mathcal{S}$, enabling the target to escape. (c) although the target speed exceeds the critical value, the condition $\mathcal{A}_0 \subset \mathcal{S}_0$ still holds, and capture remains guaranteed. In all three subplots the sets $\mathcal{A}_t$ and $\mathcal{S}_t$ are shown at the initial time $t = 0$ (dashed line) and at the termination time $t = t_f$ (solid line). }
    \label{fig:evthm2}
\end{figure*}

Figure~\ref{fig:evthm2} illustrates the outcomes of these two cases. In subplot Fig. \ref{fig:int}, the target moves directly away from the sensor (that is, $\gamma^T = \theta^{ST}_0$), which should allow it to escape in minimum time. However, the target speed, for this case, is taken as $v^T = 0.32\,$m/s. The target's speed is lower than given in Eq. (\ref{vTmax}), as a result, $\mathcal{A}_0$ lies entirely within $\mathcal{S}_0$. Therefore, by Theorem~\ref{thmVT}, this containment persists for all future times, guaranteeing the target’s capture -- even when it adopts the optimal strategy of moving directly away from the sensor.

In contrast, subplot Fig.~\ref{fig:esc} shows the scenario with $v^T = 0.35\,$m/s. At this higher speed, $\mathcal{A}_0$ extends beyond the boundary of $\mathcal{S}_0$, allowing the target to formulate a viable escape strategy. With all other parameters being the same, the simulation confirms that the engagement results in the target’s successful escape.

Using the same initial conditions, we conducted an additional simulation with the target speed set to $v^T = 0.3250\,$m/s. Although this speed exceeds the critical threshold, Fig.~\ref{fig:ex} demonstrates that $\mathcal{A}_0 \subset \mathcal{S}_0$ still holds, thereby enabling the attacker to successfully capture the target. This example illustrates that the condition presented in Theorem~\ref{thmVT} is sufficient to guarantee capture; however, the converse is not necessarily true --failure to satisfy the condition does not imply that the target is guaranteed to escape.

\subsection{Visualization of Theorem \ref{thmesc}}
The initial positions of the agents for this illustration are as follows: $S_0 = (0,\,0)$, $A_0 = (3,\,3)$, and $T_0 = (1.5,\, 0.5)$. The speeds of the sensor and the attacker are set to $v^S = 0.3\,$m/s and $v^A = 1\,$m/s, respectively. The initial line-of-sight angles and relative distances at $t = 0$ are given by $\theta^{ST}_0 = 18.43^\circ$, $d_0^{ST} = 1.5811\,$m, $\theta_0^{AT} = -120.96^\circ$, and $d_0^{AT} = 2.9155\,$m. The relative sensing range is the same as in the previous scenario. The sensor and the target follow their optimal strategies given in Eqs. (\ref{eq:gammastar}) and (\ref{gammaA}), respectively. The target heading is considered to be equal to $\gamma^T = \theta_0^{ST}$. 

Given these engagement parameters, application of Theorem~\ref{thmesc} indicates that if $v^T > 0.5181\,$m/s, then the Apollonius set $\mathcal{A}_0$ extends beyond the sensable region $\mathcal{S}_0$, making escape feasible for the target. We consider a target speed of $v^T = 0.5190\,$m/s. As illustrated in Fig.~\ref{fig:thm4esc}, it is evident that the Apollonius circle exceeds the sensable region, enabling the target to successfully escape.
\begin{figure}
    \centering
    \includegraphics[width=\linewidth]{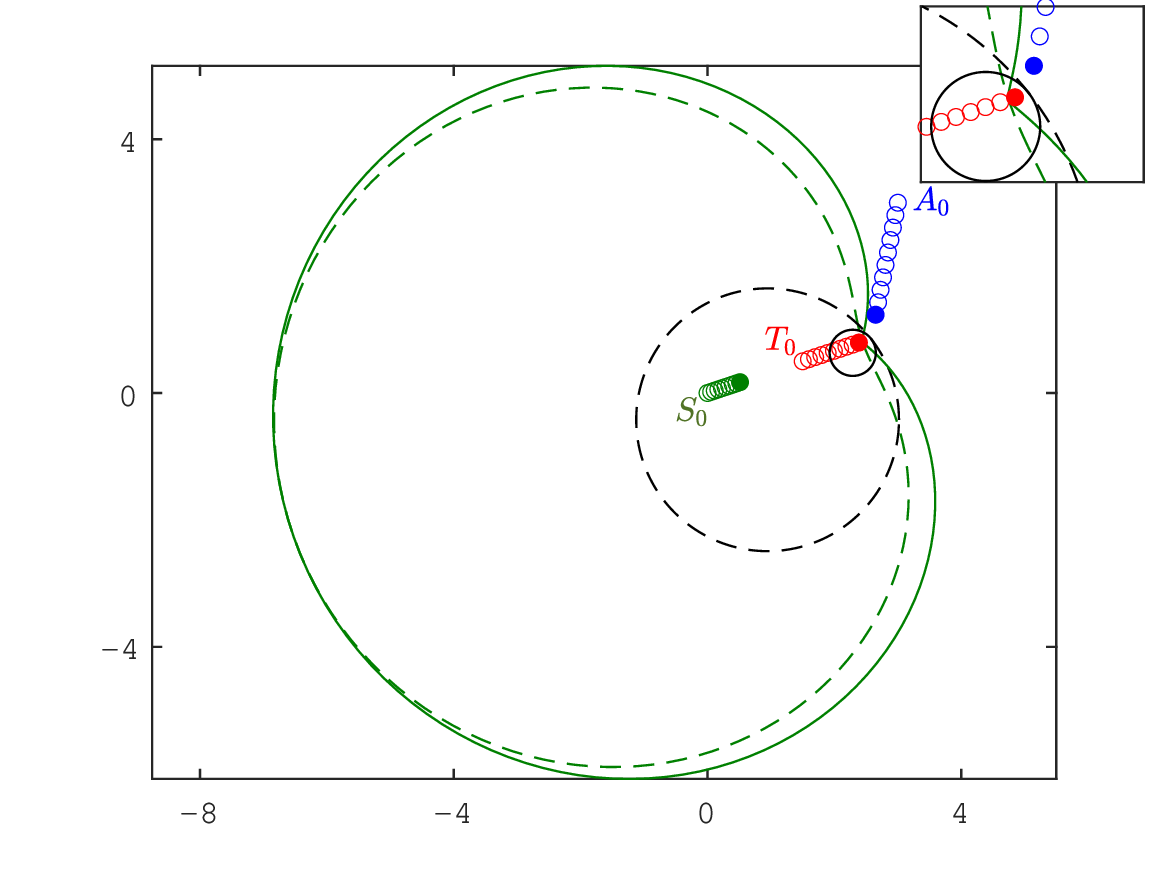}
    \caption{Illustration of Theorem \ref{thmesc}. The sets $\mathcal{A}_t$ and $\mathcal{S}_t$ are shown at the initial time $t = 0$ (dashed line) and at the termination time $t = t_f$ (solid line). Target escape for $v^T = 0.5190\,$m/s, since $\mathcal{A}_0$ extends beyond $\mathcal{S}_0$. }
    \label{fig:thm4esc}
\end{figure}

\subsection{Visualization of Theorem \ref{thm:vTstar_tangent}}
Theorem~\ref{thm:vTstar_tangent} provides a sharper bound on the target speed that determines the game's outcome. To illustrate this result, we consider the same engagement parameters and initial positions $S_0$, $A_0$, and $T_0$ as in the previous subsection. Substituting these values into the quadratic equation~(\ref{vTcrit}) yields two solutions: $\overline{v}^{T\star} = 0.1765\,$m/s and $\overline{v}^{T\star} = 0.4896\,$m/s. The admissible solution must satisfy $\underline{v}^T \leq \overline{v}^{T\star} \leq \overline{v}^T$. Using Eqs.~(\ref{vTmin}) and (\ref{vTmax}), we compute $\underline{v}^T = 0.3879\,$m/s and $\overline{v}^T = 0.5181\,$m/s. Therefore, the valid critical speed is $\overline{v}^{T\star} = 0.4896\,$m/s.

\begin{figure}
    \centering
    \includegraphics[width=\linewidth]{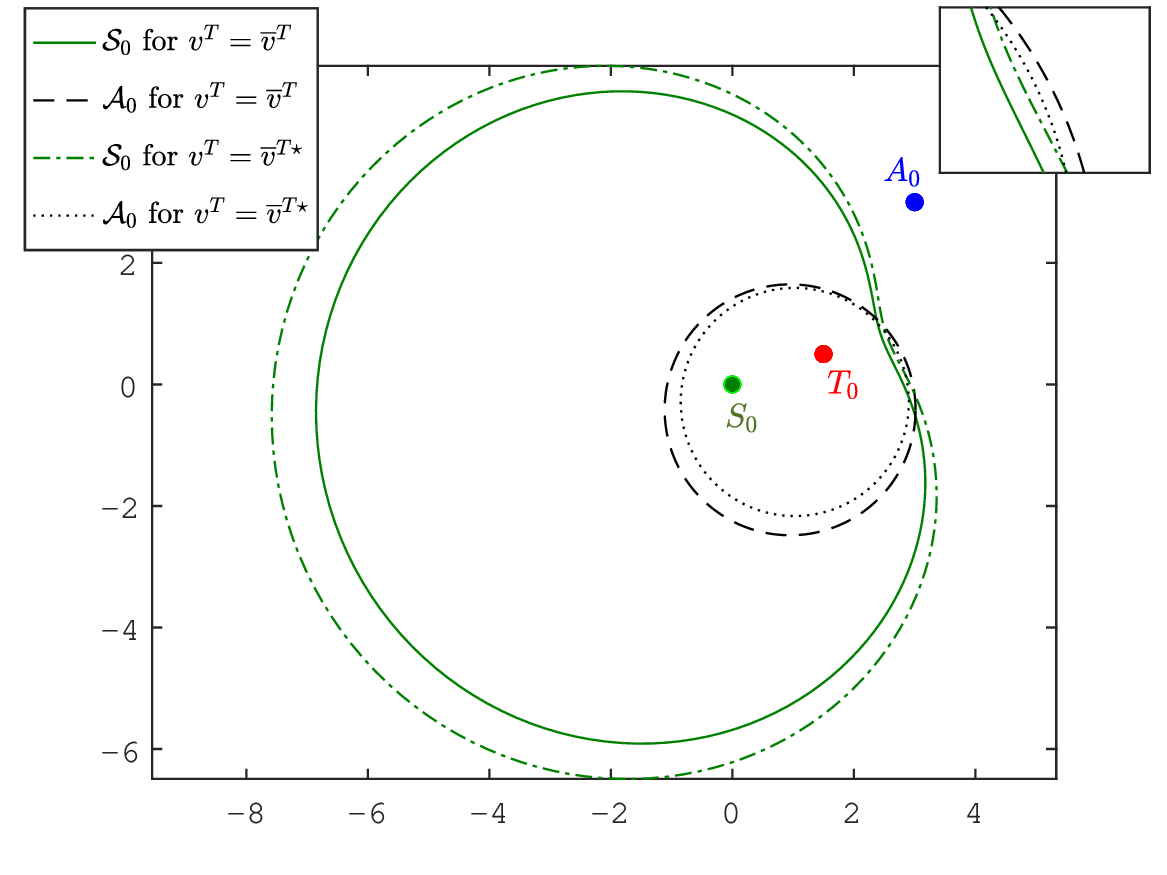}
   \caption{Illustration of Theorem~\ref{thm:vTstar_tangent} for $v^T = \overline{v}^T$ and $v^T = \overline{v}^{T\star}$. The figure shows the Apollonius circles $\mathcal{A}_0$ and the sensable regions $\mathcal{S}_0$ corresponding to both cases. The speed $\overline{v}^{T\star}$ provides a tighter bound on the critical escape speed by incorporating the geometric alignment of the agents.}
    \label{fig:thm6}
\end{figure}

Figure~\ref{fig:thm6} shows the behaviors of $\mathcal{A}_0$ and $\mathcal{S}_0$ for $v^T = \overline{v}^T$ and $v^T = \overline{v}^{T\star}$. In both cases, $\mathcal{A}_0$ extends beyond $\mathcal{S}_0$, indicating that the target has an escape strategy. However, the bound with $v^T = \overline{v}^{T\star}$ is less conservative compared to $\overline{v}^T$. Thus, $\overline{v}^{T\star}$ offers a tighter estimate of the critical speed $v^{T\star}$, discussed in Theorem \ref{thm:vTcrit}, that guarantees either escape or capture. Overall, the result confirms that incorporating the orientation between the agents allows for a sharper estimate of the minimum speed required for the target to escape.

 \section{Conclusion}\label{SecVI}
We introduced a novel problem of cooperative pursuit between a heterogeneous pair of agents such as a sensor-attacker team and a mobile target. The sensor seeks to keep the target within a sensing distance from itself, while the attacker seeks to complete the capture but only as long as the target is within sensing distance of the sensor. The target moves to make its distance from the sensor to exceed the sensing radius. We presented an optimal strategy for the sensor to maximize the time for which it can keep the target within its sensing radius, while the attacker uses an existing proportional navigation strategy for capture. We derived sufficient conditions under which this proposed set of strategies leads to capture. Specifically, we derived explicit conditions on the target speed as a function of the problem parameters which guarantees capture and which guarantees evasion from any initial player configurations. Furthermore, we established the existence of a critical target speed that serves as a necessary and sufficient condition, precisely determining the outcome of capture or evasion. We also characterized a tighter sufficient condition on the target speed for escape by considering the specific initial orientation between the agents.

Our future work will include tightening the analysis to derive necessary and sufficient condition on the target speed for capture and for evasion. Other variations of this theme of heterogeneous pursuit is a topic of future work. {Moreover, the strategies developed in this work are limited to 2D settings. Extending them to 3D introduces significant geometric complexity, especially in modeling visibility and motion, and will be explored in future work.}

\section*{Acknowledgments}
We thank Satyanarayana Gupta Manyam from DCS Corp. ~for his comments and conversations on this work.

\ifCLASSOPTIONcaptionsoff
  \newpage
\fi

 \bibliographystyle{ieeetr}
 \bibliography{ref}

\begin{IEEEbiography}[{\includegraphics[width=1in,height=1.25in,clip,keepaspectratio]{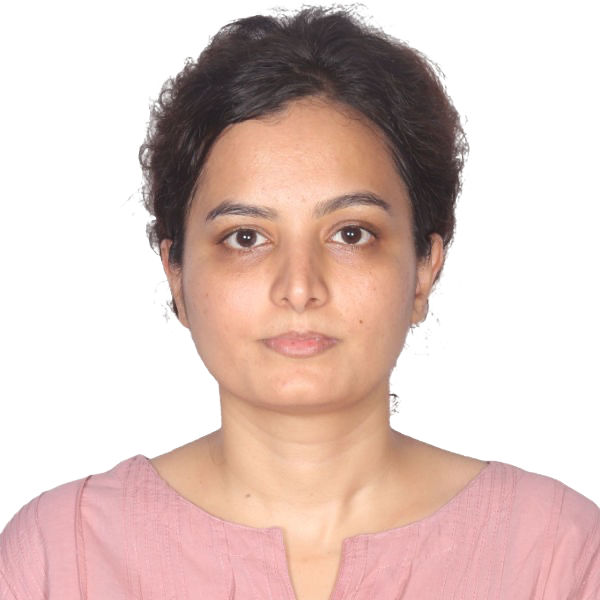}}]{Prajakta Surve}{\space}
~is a Post-Doctoral Researcher with the Electrical and Computer Engineering Department at Michigan State University, USA. Her research interests include guidance and control of aerospace vehicles, pursuit-evasion problems,  motion planning under sensing constraints and  game theory. She received the Bachelor of Engineering (B.E.) degree in Mechanical Engineering from Pimpri-Chinchwad College of Engineering, Pune, India, in 2014, and the Master of Technology (M.Tech.) degree in Aerospace Engineering with a specialization in Unmanned Aerial Vehicles from the University of Petroleum and Energy Studies, Dehradun, India, in 2017.
\end{IEEEbiography}

\begin{IEEEbiography}[{\includegraphics[width=1in,height=1.25in,clip,keepaspectratio]{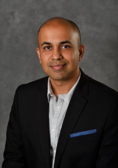}}]{Shaunak D. Bopardikar}
	~(S'08 -- M'10 -- SM'19) is an Associate Professor with the Electrical and Computer Engineering Department at the Michigan State University, USA. His research interests include autonomous motion planning and control, cyber-physical security, and scalable optimization and computation. He received the Bachelor of Technology (B.Tech.) and Master of Technology (M.Tech.) degrees in Mechanical Engineering from the Indian Institute of Technology, Bombay, India, in 2004, and the Ph.D. degree in Mechanical Engineering from the University of California at Santa Barbara, USA, in 2010. Dr. Bopardikar has over 85 refereed journal and conference publications, and holds 2 U.S.~patented inventions. His recognitions include an Air Force Research Laboratory Summer Faculty Fellowship, an IEEE Technical Committee on Security and Privacy's best student paper award (as advisor), a National Science Foundation Career Award and a Michigan State University College of Engineering's Withrow Teaching Excellence Award.
\end{IEEEbiography}

\begin{IEEEbiography}[{\includegraphics[width=1in,height=1.25in,clip,keepaspectratio]{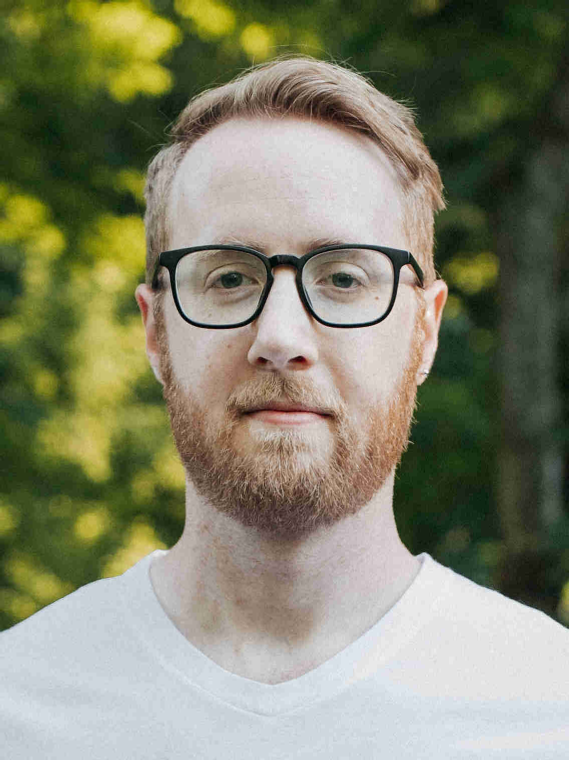}}]{Alexander Von Moll}
    ~is a researcher with the Control Science Center, Aerospace Systems Directorate, Air Force Research Laboratory. He holds a B.S. in Aerospace Engineering from Ohio State (2012), an M.S. in Aerospace Engineering from Georgia Institute of Technology (2016), and a Ph.D. in Electrical Engineering from University of Cincinnati (2022). Alex was a Department of Defense SMART Scholar, awarded in 2011 and again in 2014. His research interests include multi-agent systems, cooperative control, and differential games.
\end{IEEEbiography}
\begin{IEEEbiography}
[{\includegraphics[width=1in,height=1.25in,clip,keepaspectratio]{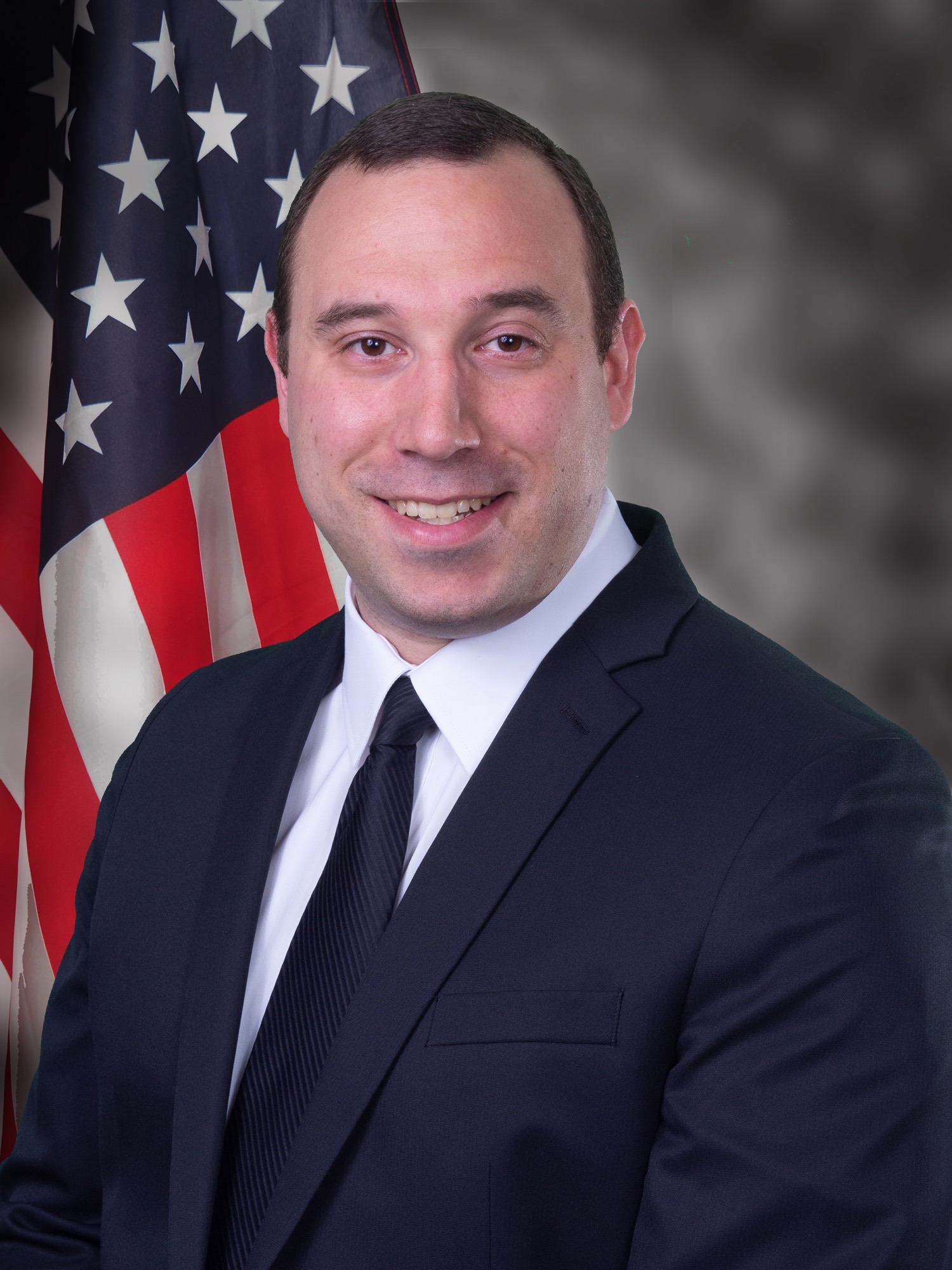}}]{Isaac Weintraub}
    ~is a Senior Electronics Engineer in the Autonomous Controls branch in the Power and Control division of the Aerospace Systems Directorate in the Air Force Research Laboratory. He received a PhD in Electrical Engineering from the Air Force Institute of Technology in 2021, an MS in Electrical Engineering from the University of Texas at Arlington in 2011, and a BS in Mechanical Engineering from Rose-Hulman Institute of Technology in 2009. He is currently an Associate Fellow of the American Institute of Aeronautics and Astronautics (AIAA) and a senior member of the Institute of Electrical and Electronics Engineers (IEEE). He is a member of the Intelligent Systems Technical Committee (ISTC) in the AIAA and an active member of both the Robotics and Automation Society (RAS) and the Controls Systems Society (CSS) in the IEEE. His research interests lie in automation and control of aerospace systems for defense applications.
\end{IEEEbiography}

\begin{IEEEbiography}[{\includegraphics[width=1in,height=1.25in,clip,keepaspectratio]{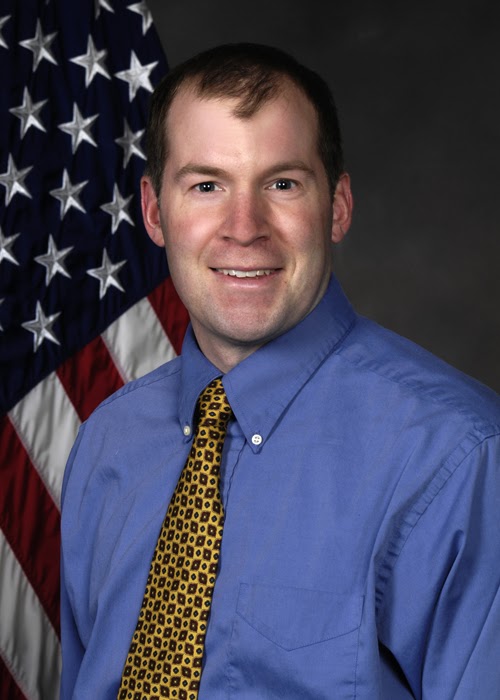}}]{David W. Casbeer}
      ~ is the Technical Area Lead over Cooperative \& Intelligent UAV Control with the Control Science Center, Aerospace Systems Directorate, Air Force Research Laboratory, where he carries out and leads basic research involving the control of autonomous UAVs with a particular emphasis on high-level decision making and planning under uncertainty. He received B.S. and Ph.D. degrees in Electrical Engineering from Brigham Young University in 2003 and 2009, respectively. 
\end{IEEEbiography}

\end{document}